\newtheorem{Thm}{Theorem}
\newtheorem{Lem}{Lemma}
\newtheorem{Cor}{Corollary}
\newtheorem{Prob}{Problem}
\newtheorem{Rem}{Remark}
\begin{document}

\title{Optimal Caching and User Association in  Cache-enabled Heterogeneous Wireless Networks}
\author{\authorblockN{Ying Cui, \ Fan Lai
}\authorblockA{Shanghai Jiao Tong University}\and\authorblockN{Stephen Hanly, Philip Whiting}\authorblockA{Macquarie University}
}

\maketitle

\begin{abstract} Heterogenous wireless networks (Hetnets) provide a powerful approach to meet the massive growth in traffic demands, but also impose a significant challenge  on  backhaul.  Caching at small base stations (BSs) and wireless small cell backhaul have been proposed as  attractive solutions to address this new challenge. In this paper, we consider  the optimal caching and user association  to minimize the total time to satisfy the average demands   in cached-enabled Hetnets with wireless backhaul. We formulate this problem as a mixed discrete-continuous optimization for given bandwidth and cache resources. First, we characterize the structure of the optimal solution. Specifically, we show that  the optimal caching is to store the most popular files at each pico BS, and the optimal user association  
has a threshold form. We also obtain the closed-form optimal solution in the homogenous scenario of pico cells. Then, we analyze the impact of bandwidth and cache resources on the minimum total time to satisfy the average demands. Finally, using numerical simulations, we verify the analytical results.
\end{abstract}

\section{Introduction}

The rapid proliferation of smart mobile devices has triggered
an unprecedented growth of the global mobile data traffic.
Heterogenous wireless networks (Hetnets) have been proposed as an effective way
to meet the dramatic traffic growth by deploying short range small base stations (BSs) together with traditional
macro BSs \cite{HetnetGhosh12}. Significant increase in network capacity is possible mainly because the small cells can operate simultaneously, providing better time or frequency reuse. However, this approach imposes a significant challenge of providing expensive high-speed backhaul links for connecting all the small BSs to the core network.
The backhaul capacity requirement can be enormously high
during peak traffic hours.

Caching at small BSs is a promising  approach to alleviate the backhaul capacity
requirement in Hetnets. Many existing
works have focused on optimal cache placement at small BSs, which is of critical importance in cache-enabled Hetnets.
For example, \cite{Shanmugam13} and \cite{Poularakis14} consider caching at small BSs in a single macro cell  with multiple small cells where the coverage areas of small cells are overlapping. File requests which cannot be satisfied locally at small BSs are served by the macro BS. Specifically, in \cite{Shanmugam13}, the authors consider the optimal caching design to minimize the expected delay for downloading uncached files from the macro BS.    In \cite{Poularakis14}, the authors consider the optimal caching to minimize the requests served by the macro BS.  The optimization problems  in \cite{Shanmugam13} and \cite{Poularakis14}  are NP-hard,  and simplified caching solutions are proposed with approximation guarantees.

Backhaul limitation is a critical problem in Hetnets.
In  \cite{EURASIP15Debbah,LiuYangICC16,Yang16,DebbahWiOpt14}, the authors consider caching at small BSs  in Hetnets  with backhaul constraints. Small BSs retrieve uncached files  via wireline backhaul from the core network and then transmit to local users. Thus, the service rate of uncached files at small BSs is also limited by the backhaul capacity.
Specifically,  \cite{EURASIP15Debbah,LiuYangICC16,Yang16} consider caching the most popular files at each small BS and focuses on analyzing the network performance. \cite{DebbahWiOpt14} considers least frequently used caching policy and studies the optimal user association.
Wireless backhaul is an attractive option for small BSs in Hetnets as it is easier to deploy and is more cost effective than fiber based backhaul. When wireless backhaul is considered, it is essential to optimally allocate  time or frequency resources between wireless backhaul for file retrievements and small  BSs for file transmissions.

To avoid inter-tier interference and increase the spatial reuse in Hetnets, the well known techniques of intercell interference coordination using almost blanking subframes (ABS slots)  and cell range expansion (CRE) have been developed. The joint optimization of ABS slots and CRE in a dynamic traffic scenario has been studied in \cite{HanlyJSAC15}, without considering cache resource and backhaul limitation.  In cache-enabled Hetnets with wireless backhaul, caching, ABS slots and CRE should be jointly optimized  under wireless backhaul constraints, in order to fully exploit the bandwidth and cache resources to improve  network capacity. Note that CRE is not considered in \cite{Shanmugam13}, \cite{Poularakis14}, and optimal caching   is not studied in \cite{EURASIP15Debbah} and \cite{DebbahWiOpt14}.
In this paper, we consider the joint optimization of caching, ABS slots and CRE  in  a cache-enabled Hetnet consisting of a single macro cell containing  multiple pico BSs with wireless backhaul. We formulate a mixed discrete-continuous optimization of the caching and  user association to minimize the total time to satisfy the average demands in a dynamic traffic scenario. Note that the  user association  reflects control of ABS slots and CRE. We show that the optimal caching is to store the most popular files at each pico BS, and the optimal  user association  for both cached and uncached files has a threshold form  resulting in two pico serving regions. The pico serving region for cached files  is larger and contains the pico serving region for uncached files. Both ranges are adaptive to the traffic density, cache resource and bandwidth resource.
The range difference comes from the difference in backhaul consumption.  We also obtain the closed-form optimal solution in the homogenous scenario of pico cells. Then, we analyze the impact of  bandwidth and cache resources on the minimum total time to satisfy the average demands. Finally, using numerical simulations, we verify the analytical results.

\section{System Model}

We consider a cache-enabled  Hetnet consisting of a single
macro cell containing $L\in \mathbb N_+$ pico BSs, as illustrated in Fig.~\ref{fig:model}.\footnote{The network topology and traffic model are similar to those in \cite{HanlyJSAC15}. However, \cite{HanlyJSAC15} does not  consider file popularity, cache and wireless backhaul.}  Let $\mathcal L\triangleq \{1, 2, \cdots, L\}$ denote the set of $L$ pico BSs.
Let
$\mathcal K \subset\mathbb R^2$
denote the (compact) coverage area of the macro BS, while
$\mathcal K_l\subset \mathcal K$, $l\in \mathcal L$ denote the respective coverage areas of
the pico BSs. The pico BS coverage areas are assumed disjoint
such that a user at a location in $\mathcal K_l$ can obtain service from both
the macro BS and the pico BS $l$.
Let $\mathcal K_0\triangleq \mathcal K-\cup_{l=1}^L \mathcal K_l $
denote the set of locations not covered by any pico BS.
Any user at a location in  $\mathcal K_0$ will obtain all its support from the macro BS.

\begin{figure}
\begin{center}
 \includegraphics[width=5cm]{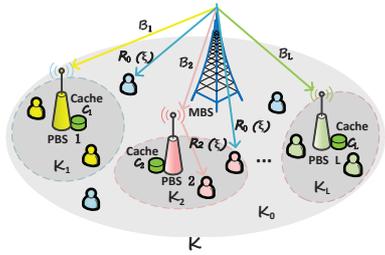}
  \end{center}
    \caption{\small{System model.}}
\label{fig:model}
\end{figure}

Let $\mathcal N\triangleq \{1, \cdots, N\}$ denote the set of $N$ files (contents) in the network. For
ease of illustration, we assume that all files have the same
size.\footnote{Files
of different sizes can be divided into chunks of the same length. Thus, the
results in this paper can be  extended to the case of different file sizes.} 
All the files are available at the macro BS. Each pico BS $l$ is equipped with a cache, which can store $C_l\in \{0,1,\cdots, N\}$ files.  Denote $\mathbf C\triangleq (C_l)_{l\in \mathcal L}$.   Let $s_{l,n}\in\{0,1\}$  represent the caching state for file $n$ at pico BS $l$, where $s_{l,n}=1$
if file $n$ is cached at pico BS $l$,  and $s_{l,n}=0$ otherwise. Denote $\mathbf s_l\triangleq (s_{l,n})_{n\in \mathcal N}$. Note that  $\mathbf s_l$ satisfies $\sum_{n=1}^N s_{l,n}\leq C_l$.
For ease of illustration, we assume each pico BS $l$ can retrieve  uncached files from the macro BS via a wireless backhaul. Note that our formulation and solution hold when pico BSs retrieve uncached files from any connection point to the core network.

Users want to download files from the
network. We assume that the
file popularity distribution $\mathbf p\triangleq (p_n)_{n\in \mathcal N }$ is identical among all users and is known   apriori, where $p_n\in [0,1]$ is the popularity of file $n$ and $\mathbf p$ satisfies $\sum_{n\in \mathcal N}p_n=1$. In addition, without loss of generality, we assume $p_{1}\ge p_{2}\ldots\ge p_{N}$.
We assume that file requests arrive as a Poisson process with
arrival rate $\lambda_S$ files/sec and that an arrival is for file $n$ with probability $p_n$.  The locations of the arrivals are chosen independently at random according to a continuous density $\eta(d\xi)$ with support
on $\mathcal K$ and bounded uniformly away from 0. Users remain
fixed at their initial locations until they obtain their files. Hence, the expected number of requests per second at the vicinity
of a point $\xi\in\mathcal K$ is given by
$\lambda(d\xi) = \lambda_S \times\eta(d\xi)$.

%

The macro BS is assumed to use a much higher transmit
power than the pico BSs, to allow it to provide full coverage
of the region $\mathcal K$. We will therefore only consider control 
policies in which macro time and pico times are disjoint,
to avoid excessive interference at the users from the macro
BS when being served by pico BSs. During macro time, the macro BS can transmit to a user or a pico BS. On the other hand, pico BSs are spatially separated and
use much lower power. We will therefore consider control 
policies in which the pico BSs are allowed to operate simultaneously  during pico time. Let the time allocated to the pico cells be denoted by $f$ seconds.

The    transmission rate of a user is determined by the transmitting BS and
the user's  fixed location $\xi$.\footnote{We assume that the duration of a file transmission is long
enough to average the small-scale channel fading process.} All locations are in the macro BS
coverage area, and the corresponding    rate provided
by the macro BS to location $\xi\in\mathcal K$, if scheduled, is $R_0(\xi)$ file/sec/Hz.  If the location is within the coverage area of pico BS $l$, then an alternative rate provided by pico BS $l$ to location $\xi\in \mathcal K_l$, if scheduled,  is $R_l(\xi)$ file/sec/Hz. Assume that rates depend continuously on location with $0<R_{l,\min}\leq R_l(\xi)\leq R_{l,\max}$ for pico cell rates, where $\xi\in \mathcal K_l$ and $l\in \mathcal L$, and  with $0<R_{0,\min}\leq R_0(\xi)\leq R_{0,\max}$ for  the macro cell rates, where $\xi\in \mathcal K$. In addition, the    rate of the wireless backhaul from the macro BS  to pico BS $l$, if scheduled, is  $B_l$ file/sec/Hz. Since a pico BS has a much larger  receive antenna gain than a user,  we assume $R_0(\xi)<B_l$ for all $\xi\in \mathcal K_l$ and $l\in \mathcal L$. The total bandwidth is $W>0$ Hz.

Note that any user at a location in $\mathcal K_0$ is only served by the macro BS during macro time.
A user at a location in $\mathcal K_l$ can receive a file together from the macro BS and pico BS $l$ during macro time and pico time, respectively.
Consider location $\xi\in \mathcal K_l$ in the coverage area of pico BS $l$. Let $x_{l,n}(\xi)\in [0,1]$ denote the fraction of file $n$ delivered by pico BS $l$ to location $\xi\in \mathcal K_l$ at rate $R_l(\xi)$ during pico time. If file $n$ is not stored at pico BS $l$, i.e., $s_{l,n}=0$, then this  $x_{l,n}(\xi)$ fraction of file $n$ has to be first delivered from the macro BS to pico BS $l$ at rate $B_l$ via the wireless backhaul during macro time; otherwise, the delivery of this  $x_{l,n}(\xi)$ fraction of file $n$  will not consume macro time. The remaining $1-x_{l,n}(\xi)$  fraction of file $n$ will be delivered by the macro BS to location $\xi\in \mathcal K_l$ at rate $R_0(\xi)$ during macro time.  Denote $\mathbf x_l(\xi)\triangleq (x_{l,n}(\xi))_{n\in \mathcal N}$ for all $\xi\in \mathcal K_l$ and $l\in \mathcal L$. Note that association  $\{\mathbf x_l(\xi):\xi\in \mathcal K_l, l\in \mathcal L,\}$ reflects control of ABS slots and CRE.


%

\section{Problem Formulation}

The time that the HetNet must be active in  order to satisfy given traffic demands
  is an important performance metric.
Our goal is to find the optimal  pico time $f^*$, caching $\{\mathbf s_l^*:l\in \mathcal L\}$ and association  $\{\mathbf x_l^*(\xi):\xi\in \mathcal K_l, l\in \mathcal L\}$ to minimize the total  time that the HetNet must be active in  order to satisfy the average requests, under given system resources, i.e., bandwidth  resource $W$ and cache resource  $\mathbf C$.

\begin{Prob} [Optimal Caching and Scheduling]
\begin{align}
\tau^*=\min_{\substack{f,\{\mathbf s_l:l\in \mathcal L\},\\
\{\mathbf x_l(\xi):\xi\in \mathcal K_l, l\in \mathcal L\}}}\ & f+\sum_{n=1}^N\int_{\mathcal K_0}\frac{1}{WR_0(\xi)}p_{n} \lambda(d\xi)\nonumber\\
&+\sum_{l=1}^L \sum_{n=1}^N\int_{\mathcal K_l}\frac{1-x_{l,n}(\xi)}{WR_0(\xi)}p_{n} \lambda(d\xi)\nonumber\\
&+\sum_{l=1}^L \sum_{n=1}^N\int_{\mathcal K_l}\frac{(1-s_{l,n})x_{l,n}(\xi)}{WB_l}p_{n} \lambda(d\xi)\nonumber\\
s.t.\quad &\sum_{n=1}^N\int_{\mathcal K_l}\frac{x_{l,n}(\xi)}{WR_l(\xi)}p_{n} \lambda(d\xi)\leq f, \quad l\in \mathcal L\nonumber\\
& 0\leq x_{l,n}(\xi)\leq 1, \quad l\in \mathcal L, \ n\in \mathcal N\nonumber\\
&s_{l,n}\in\{0,1\},\quad l\in \mathcal L, \ n\in \mathcal N\nonumber\\
&\sum_{n=1}^N s_{l,n}\leq C_l, \quad l\in \mathcal L.\nonumber
\end{align}\label{prob:original}
\end{Prob}

Note that $\sum_{n=1}^N\int_{\mathcal K_0}\frac{1}{WR_0(\xi)}p_{n} \lambda(d\xi)=\int_{\mathcal K_0}\frac{1}{WR_0(\xi)}\lambda(d\xi)\triangleq \tau_0$ represents the macro time to satisfy the average requests from $\mathcal K_0$,\footnote{Note that $\tau_0$ is irrelevant to the optimization in Problem~\ref{prob:original}. We include it in the objective function for ease of the investigation of the impact of the system parameters  on the optimal total time $\tau^*$.} $\sum_{n=1}^N\int_{\mathcal K_l}\frac{1-x_{l,n}(\xi)}{WR_0(\xi)}p_{n} \lambda(d\xi)$ represents the macro time to directly satisfy the average requests from $\mathcal K_l$, while $\sum_{n=1}^N\int_{\mathcal K_l}\frac{(1-s_{l,n})x_{l,n}(\xi)}{WB_l}p_{n} \lambda(d\xi)$ represents the  macro time to satisfy the average requests from $\mathcal K_l$ indirectly, by first delivering  the uncached  files to pico BS $l$ via the wireless  backhaul.  $\sum_{n=1}^N\int_{\mathcal K_l}\frac{x_{l,n}(\xi)}{WR_l(\xi)}p_{n} \lambda(d\xi)$ represents the pico time to satisfy the average requests from $\mathcal K_l$. If pico BS $l$ carries all the average requests from $\mathcal K_l$, then it needs time $\bar f_l\triangleq \sum_{n=1}^N\int_{\mathcal K_l}\frac{1}{WR_l(\xi)}p_{n} \lambda(d\xi)= \int_{\mathcal K_l}\frac{1}{WR_l(\xi)}\lambda(d\xi)$. The maximum of such time over all pico BSs is $\bar f\triangleq \max_{l\in \mathcal L}\bar f_l$.

Problem~\ref{prob:original} is one in the calculus of variations over function $x_{l,n}(\xi)\in [0,1]$, $l\in \mathcal L$, $n\in \mathcal N$.  It  also has a continuous variable $f\geq 0$ and  $LN$ discrete variables $s_{l,n}\in \{0,1\}$, $l\in \mathcal L$, $n\in \mathcal N$. Note that the space of $x_{l,n}(\xi)\in [0,1]$, $l\in \mathcal L$, $n\in \mathcal N$ is compact, $f\geq 0$ can be taken no greater than $\bar f$, i.e., $f\in [0, \bar f]$, without affecting the optimality. In addition,  the space of $s_{l,n}\in \{0,1\}$, $l\in \mathcal L$, $n\in \mathcal N$ is finite, and the map from $x_{l,n}(\xi)\in [0,1]$, $l\in \mathcal L$, $n\in \mathcal N$ and  $f\in [0, \bar f]$ to the objective function in Problem~\ref{prob:original} is continuous for any $s_{l,n}\in \{0,1\}$, $l\in \mathcal L$, $n\in \mathcal N$.  Therefore, the minimum is achieved.

Using decomposition, Problem~\ref{prob:original}  can be equivalently transformed into the following master problem with $L$ subproblems.

\begin{Prob} [Master Problem of Problem~\ref{prob:original}]
\begin{align}
\tau^*=\min_{f}\quad &\tau(f)\nonumber\\
s.t.\quad & f\geq 0\nonumber
\end{align}
where $\tau(f)\triangleq  f+\tau_0
+\sum_{l=1}^L \tau_l(f)$
and $\tau_l(f)$  is given by the optimal value of Subproblem $l$ for given $f\geq 0$.\label{prob:master}
\end{Prob}
\begin{Prob} [Subproblem $l$ of Problem~\ref{prob:original}] For $f\geq 0$,
\begin{align}
\tau_l(f)=\min_{\substack{\mathbf s_l,\\ \{\mathbf x_l(\xi):\xi\in \mathcal K_l\}}}\quad &  \sum_{n=1}^N\int_{\mathcal K_l}\frac{1-x_{l,n}(\xi)}{WR_0(\xi)}p_{n} \lambda(d\xi)\nonumber\\
&+  \sum_{n=1}^N\int_{\mathcal K_l}\frac{(1-s_{l,n})x_{l,n}(\xi)}{WB_l}p_{n} \lambda(d\xi)\nonumber\\
s.t.\quad &\sum_{n=1}^N\int_{\mathcal K_l}\frac{x_{l,n}(\xi)}{WR_l(\xi)}p_{n} \lambda(d\xi)\leq f \label{eqn:pico-f-constr}\\
& 0\leq x_{l,n}(\xi)\leq 1, \quad   n\in \mathcal N\label{eqn:x-constr}\\
&s_{l,n}\in\{0,1\},\quad   n\in \mathcal N\label{eqn:s-constr}\\
&\sum_{n=1}^N s_{l,n}\leq C_l. \label{eqn:cache-constr}
\end{align}
\label{prob:subprob}
\end{Prob}

Note that $\tau_0$ denotes  the  macro time to satisfy the average requests from $\mathcal K_0$ and $\tau_l(f)$ denotes the optimal  macro time to satisfy the average requests from $\mathcal K_l$ directly and indirectly, given that the  pico time is no greater than $f$. Therefore, $\tau(f)$  represents the  total time to satisfy the average requests in the HetNet given that the   pico time  is no greater than $f$.

Problem~\ref{prob:subprob} for pico BS $l$ is a mixed discrete continuous optimization.  Similarly, it can be easily verified that the minimum is achieved. Problem~\ref{prob:master} is a continuous optimization problem over a single variable $f\geq 0$. Later, we shall show that Problem~\ref{prob:master} is convex.

\section{Optimality Properties}

\subsection{Optimal Solution}
In this part, we characterize the optimal solution to Problem~\ref{prob:original}.
We first characterize the optimal solution to  the subproblem for pico BS $l$ in Problem~\ref{prob:subprob} for any given $f\geq 0$. Then, we characterize the optimal solution to   Problem~\ref{prob:master}.

To solve Problem~\ref{prob:subprob}, we first solve the continuous relaxation of Problem~\ref{prob:subprob} where $s_{l,n}\in \{0,1\}$ is relaxed to $s_{l,n}\in [0,1]$. We show that  the optimal solution to the relaxed problem satisfies $s_{l,n}\in \{0,1\}$,  and hence it is also the optimal solution  to Problem~\ref{prob:subprob}.  Define $\rho_l(\xi,s)\triangleq \frac{R_l(\xi)}{R_0(\xi)}-\frac{(1-s)R_l(\xi)}{B_l}$ ($\xi\in \mathcal K_l$), $\underline{\rho_l}\triangleq \inf\{\rho_l(\xi,0):\xi\in \mathcal K_l\}$, $\bar{\rho_l}\triangleq \sup\{\rho_l(\xi,1):\xi\in \mathcal K_l\}$, $\mathcal A_l(\rho,s)\triangleq \{\xi\in \mathcal K_l: \rho_l(\xi,s)> \rho \}$, and $S_{C_l}\triangleq \sum_{n=1}^{C_l}p_n$. Note that  $\mathcal A_l(\rho,0)\subseteq\mathcal A_l(\rho,1)\subseteq \mathcal K_l$.  Suppose $\int_{\mathcal A_l(\rho,s)}d\xi$ is continuous in $\rho$.

\begin{Lem} For any given $f\geq 0$, the optimal solution to Problem~\ref{prob:subprob} is given by
\begin{align}
s_{l,n}^*=&
\begin{cases}1, & n=1,\cdots, C_l\\
0, & \text{otherwise}
\end{cases}\label{eqn:opt-s}\\
x_{l,n}^*(\xi, f)=&
\begin{cases}1, & \rho_{l}(\xi,s_{l,n}^*)> \rho_l(f)\\
0, & \text{otherwise}
\end{cases}\label{eqn:opt-x-rho}
\end{align}
where $\rho_l(f)$ satisfies
\begin{align}
&S_{C_l}\int_{\mathcal A_l(\rho_l(f),1)}\frac{1}{WR_l(\xi)} \lambda(d\xi)\nonumber\\
&+ (1-S_{C_l})\int_{\mathcal A_l(\rho_l(f),0)}\frac{1}{WR_l(\xi)} \lambda(d\xi)=f\label{eqn:rho-equation}
\end{align}
 if $f\leq\bar f_l$, and $\rho_l(f)=0$ otherwise.
  In addition, the optimal value of Problem~\ref{prob:subprob} is given by
\begin{align}
\tau_l(f)=&S_{C_l}\int_{\mathcal K_l-\mathcal A_l(\rho_l(f),1)}\frac{1}{WR_0(\xi)} \lambda(d\xi)\nonumber\\
&+(1-S_{C_l})\int_{\mathcal K_l-\mathcal A_l(\rho_l(f),0)}\frac{1}{WR_0(\xi)} \lambda(d\xi)\nonumber\\
&+(1-S_{C_l})\int_{\mathcal A_l(\rho_l(f),0)} \frac{1}{WB_l}\lambda(d\xi).\label{eqn:subp-opt}
\end{align}\label{Lem:solution-f}
\end{Lem}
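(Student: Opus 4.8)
My plan is to pass to the continuous relaxation of Problem~\ref{prob:subprob} (replace $s_{l,n}\in\{0,1\}$ by $s_{l,n}\in[0,1]$) and analyse it via the Lagrangian obtained by dualising the pico-time budget \eqref{eqn:pico-f-constr} with a multiplier $\mu\ge 0$. Collecting the $\mathbf s_l$-independent constant aside, this Lagrangian is
\[
\mathcal L(\mathbf x_l,\mathbf s_l,\mu)=\text{const}-\mu f+\sum_{n=1}^{N}\int_{\mathcal K_l}\frac{p_n}{W}\Bigl(\frac{\mu}{R_l(\xi)}+\frac{1-s_{l,n}}{B_l}-\frac{1}{R_0(\xi)}\Bigr)x_{l,n}(\xi)\,\lambda(d\xi),
\]
which can be minimised over $x_{l,n}(\xi)\in[0,1]$ \emph{pointwise}: after multiplying the bracket by $R_l(\xi)>0$ it is negative precisely when $\mu<\rho_l(\xi,s_{l,n})$, so the minimiser is $x_{l,n}(\xi)=1$ if $\rho_l(\xi,s_{l,n})>\mu$ and $0$ otherwise --- which is \eqref{eqn:opt-x-rho} with $\rho_l(f)=\mu$.

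Substituting this minimiser back, the Lagrangian collapses to $\text{const}-\mu f+\sum_{n=1}^{N}p_n\,r(s_{l,n})$ with
\[
r(s)\triangleq\int_{\mathcal K_l}\frac{1}{W}\min\Bigl\{0,\ \frac{\mu}{R_l(\xi)}+\frac{1-s}{B_l}-\frac{1}{R_0(\xi)}\Bigr\}\,\lambda(d\xi),
\]
and the crucial point is that $r$ is the \emph{same} function of its scalar argument for every $n$ (the file index enters only through the factor $p_n$). Being an integral of functions of the form $\min\{0,\,a(\xi)-b(\xi)s\}$ with $b(\xi)=1/B_l>0$ --- each a minimum of two affine functions of $s$, hence concave, and non-increasing since $b(\xi)>0$ --- $r$ is concave and non-increasing. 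Therefore $\sum_n p_n r(s_{l,n})$ is a concave function of $\mathbf s_l$ on the polytope $\{\mathbf s_l\in[0,1]^N:\sum_n s_{l,n}\le C_l\}$ and attains its minimum at a vertex; since $C_l\in\mathbb N$, every vertex of that polytope lies in $\{0,1\}^N$. At such a vertex the value equals $r(0)+\bigl(r(1)-r(0)\bigr)\sum_{n:\,s_{l,n}=1}p_n$, and as $r(1)\le r(0)$ and $p_1\ge\cdots\ge p_N$, this is minimised --- for every $\mu$ --- by caching the $C_l$ most popular files, i.e.\ by $\mathbf s_l^*$ in \eqref{eqn:opt-s}.

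It remains to select $\mu$. The pico time used by the threshold rule at $\mathbf s_l=\mathbf s_l^*$ is $S_{C_l}\int_{\mathcal A_l(\mu,1)}\tfrac{\lambda(d\xi)}{WR_l(\xi)}+(1-S_{C_l})\int_{\mathcal A_l(\mu,0)}\tfrac{\lambda(d\xi)}{WR_l(\xi)}$; by the standing hypothesis that $\int_{\mathcal A_l(\rho,s)}d\xi$ is continuous in $\rho$, together with the bounded density of $\lambda$, this quantity is continuous and non-increasing in $\mu$, it equals $\bar f_l$ at $\mu=0$ (because $R_0(\xi)<B_l$ forces $\rho_l(\xi,s)>0$, so $\mathcal A_l(0,s)=\mathcal K_l$) and it tends to $0$ as $\mu\uparrow\bar\rho_l$. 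Hence for $f\le\bar f_l$ the intermediate value theorem yields $\mu=\rho_l(f)\ge0$ solving \eqref{eqn:rho-equation}, whereas for $f\ge\bar f_l$ the budget \eqref{eqn:pico-f-constr} is already slack at $x_{l,n}(\xi)\equiv1$ and we take $\rho_l(f)=0$. In both cases $(\mathbf s_l^*,\mathbf x_l^*(\cdot,f))$ is primal feasible, minimises $\mathcal L(\cdot,\cdot,\mu)$, and satisfies complementary slackness, so it is optimal for the relaxed subproblem by the sufficient direction of the saddle-point conditions; since $\mathbf s_l^*$ is $\{0,1\}$-valued it solves Problem~\ref{prob:subprob} too. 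Finally, plugging $(\mathbf s_l^*,\mathbf x_l^*(\cdot,f))$ into the objective and partitioning $\mathcal K_l$ into $\mathcal A_l(\rho_l(f),1)$ (cached files served by pico), $\mathcal A_l(\rho_l(f),0)$ (uncached files served by pico through the backhaul) and their complements gives \eqref{eqn:subp-opt}.

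The step I expect to be the crux is the reduction in the second paragraph: once $\mathbf x_l$ is eliminated, the caching variables act only through a \emph{single common concave, non-increasing profile} $r(\cdot)$, which is what decouples the caching decision from the shared threshold $\rho_l(f)$ and turns it into ``maximise $\sum_{n\in\text{cache}}p_n$ over an integral polytope''. Two further technical points need care: the pointwise (measurable) minimisation of $\mathcal L$ over the function variable $x_{l,n}(\cdot)$, and the existence of $\rho_l(f)$ --- precisely where the continuity hypothesis on $\int_{\mathcal A_l(\rho,s)}d\xi$ is invoked; and since the relaxed objective is bilinear, hence not jointly convex, in $(\mathbf s_l,\mathbf x_l)$, one must argue ``Lagrangian-minimiser $+$ feasibility $+$ complementary slackness $\Rightarrow$ optimality'' rather than appeal to strong duality.
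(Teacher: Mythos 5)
Your proposal is correct and follows essentially the same route as the paper's Appendix A: relax $s_{l,n}$ to $[0,1]$, dualise the pico-time budget, minimise the Lagrangian pointwise over $x_{l,n}(\xi)$ to get the threshold rule, then over $\mathbf s_l$ to get most-popular caching, pick the multiplier by continuity/stationarity of the dual, and conclude by Lagrangian sufficiency rather than strong duality. Your concavity-plus-vertex argument for the caching step is in fact a slightly more rigorous rendering of the paper's terser claim that $a(s)$ is decreasing and $p_n$ is nonincreasing, since monotonicity alone does not by itself rule out fractional optima on the polytope $\{\mathbf s_l\in[0,1]^N:\sum_n s_{l,n}\le C_l\}$.
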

\begin{proof}
Please refer to Appendix A.
\end{proof}

\begin{Rem} [Optimal Structure] For any given pico time $f\geq 0$, the optimal caching is to cache the most popular $C_l$ files. Thus, $S_{C_l}$ reflects the cache hit probability. Given pico time $f\geq 0$, the optimal  association  $x_{l,n}^*(\xi, f)$ takes the threshold form where the threshold $\rho_l(f)$ depends on the pico time  $f$. Note that $\rho_l(f)$ also depends on $W$ and $C_l$, which are assumed to be fixed for now. The impact of   $W$ and $C_l$ on $\rho_l(f)$ will be studied later in Section~\ref{subsec:WC}. Specifically, for any cached file $n=1,\cdots, C_l$, the optimal   association   $x_{l,n}^*(\xi, f)=1$ if $\rho_{l}(\xi, 1)> \rho_l(f)$;  for any uncached file $n=C_l+1,\cdots, N$, the optimal  association  $x_{l,n}^*(\xi, f)=1$ if $\rho_{l}(\xi, 0)> \rho_l(f)$. In other words, the threshold $\rho(f)$ determines two pico serving regions $\mathcal A_l(\rho_l(f),1)\subseteq \mathcal K_l$ and $\mathcal A_l(\rho_l(f),0)\subseteq \mathcal K_l$ for the cached files and uncached files, respectively, where $\mathcal A_l(\rho_l(f),0)\subseteq \mathcal A_l(\rho_l(f),1)$.  For any cached file $n=1,\cdots, C_l$, the optimal  association  $x_{l,n}^*(\xi, f)=1$ if $\xi\in \mathcal A_l(\rho_l(f),1) $;  for any uncached file $n=C_l+1,\cdots, N$, the optimal  association   $x_{l,n}^*(\xi, f)=1$ if $ \xi\in \mathcal A_l(\rho_l(f),0)$. Each pico BS is more willing to serve requests for cached files, as no macro time  is consumed  for fetching these files from the macro BS via wireless backhaul.
\end{Rem}


From Lemma~\ref{Lem:solution-f}, we have the following corollary.
\begin{Cor} When $f=0$ (no pico time), we have $x_{l,n}^*(\xi, f)=0$ for all $\xi\in \mathcal K_l$ and
$\tau_l(f)=\int_{\mathcal K_l}\frac{1}{WR_0(\xi)} \lambda(d\xi)$. When $f\geq \bar f_l$, we have $x_{l,n}^*(\xi, f)=1$ for all $\xi\in \mathcal K_l$ and  $\tau_l(f)=(1-S_{C_l})\int_{\mathcal K_l} \frac{1}{WB_l}\lambda(d\xi)$. \label{Cor:solu-f0fbar}
\end{Cor}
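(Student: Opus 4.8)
The plan is to obtain the corollary directly from Lemma~\ref{Lem:solution-f} by evaluating its characterization at the two extreme values $f=0$ and $f\ge\bar f_l$, invoking only the strict positivity of all rates and the standing assumption $R_0(\xi)<B_l$ on $\mathcal K_l$. Throughout, set identities are meant modulo $\lambda$-null sets, which is all that the integrals in \eqref{eqn:rho-equation} and \eqref{eqn:subp-opt} detect.

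First I would handle $f=0$. Since $0=f\le\bar f_l$, the threshold $\rho_l(0)$ solves \eqref{eqn:rho-equation} with right-hand side $0$; as both terms on its left are integrals of the strictly positive integrand $\frac{1}{WR_l(\xi)}$ against the nonnegative measure $\lambda$, their sum vanishes only if $\lambda(\mathcal A_l(\rho_l(0),1))=0$, and since $\mathcal A_l(\rho_l(0),0)\subseteq\mathcal A_l(\rho_l(0),1)$ the uncached region is null too. Equivalently $\rho_l(0)\ge\bar{\rho_l}$, so $\rho_l(\xi,s_{l,n}^*)\le\bar{\rho_l}\le\rho_l(0)$ for all $\xi\in\mathcal K_l$ and all $n$ (using that $\rho_l(\xi,\cdot)$ is nondecreasing and $s_{l,n}^*\le1$), hence $x_{l,n}^*(\xi,0)=0$ for all $\xi\in\mathcal K_l$, $n\in\mathcal N$ by \eqref{eqn:opt-x-rho}. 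Substituting the null regions $\mathcal A_l(\rho_l(0),1)$, $\mathcal A_l(\rho_l(0),0)$ into \eqref{eqn:subp-opt} and using $S_{C_l}+(1-S_{C_l})=1$ gives $\tau_l(0)=\int_{\mathcal K_l}\frac{1}{WR_0(\xi)}\lambda(d\xi)$. (Alternatively, the pico-time constraint \eqref{eqn:pico-f-constr} at $f=0$ already forces $x_{l,n}(\xi)=0$ $\lambda$-a.e.\ on any feasible point, which pins down the objective value directly.)

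Next I would handle $f\ge\bar f_l$. By the last clause of Lemma~\ref{Lem:solution-f}, $\rho_l(f)=0$, so it suffices to verify $\rho_l(\xi,s)>0$ for every $\xi\in\mathcal K_l$, $s\in\{0,1\}$: for $s=1$, $\rho_l(\xi,1)=R_l(\xi)/R_0(\xi)>0$ since all rates are positive; for $s=0$, $\rho_l(\xi,0)=R_l(\xi)\left(\frac{1}{R_0(\xi)}-\frac{1}{B_l}\right)=R_l(\xi)\,\frac{B_l-R_0(\xi)}{R_0(\xi)B_l}>0$ by the hypothesis $R_0(\xi)<B_l$. Therefore $\mathcal A_l(0,s)=\mathcal K_l$ for $s\in\{0,1\}$, so $\mathcal K_l-\mathcal A_l(0,1)$ and $\mathcal K_l-\mathcal A_l(0,0)$ are empty and $x_{l,n}^*(\xi,f)=1$ for all $\xi\in\mathcal K_l$, $n\in\mathcal N$ by \eqref{eqn:opt-x-rho}; plugging $\rho_l(f)=0$ into \eqref{eqn:subp-opt} kills the first two integrals and leaves the third over $\mathcal A_l(0,0)=\mathcal K_l$, i.e.\ $\tau_l(f)=(1-S_{C_l})\int_{\mathcal K_l}\frac{1}{WB_l}\lambda(d\xi)$.

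There is no genuine obstacle: the statement is a bookkeeping specialization of Lemma~\ref{Lem:solution-f}. The only points that need a word of care are (i) justifying that the $\lambda$-null serving regions at $f=0$ can be discarded from the integrals in \eqref{eqn:subp-opt}, and (ii) explicitly invoking the standing hypotheses ($R_{l,\min}>0$ and, crucially, $R_0(\xi)<B_l$ on $\mathcal K_l$) to get the strict inequality $\rho_l(\xi,s)>0$ that forces $\mathcal A_l(0,s)=\mathcal K_l$ once $f\ge\bar f_l$.
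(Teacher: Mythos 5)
Your proof is correct and follows exactly the route the paper intends: the corollary is stated without proof as an immediate specialization of Lemma~\ref{Lem:solution-f}, and you simply carry out that specialization at $f=0$ (where \eqref{eqn:rho-equation}, or more directly the constraint \eqref{eqn:pico-f-constr}, forces the serving regions to be $\lambda$-null) and at $f\ge\bar f_l$ (where $\rho_l(f)=0$ together with $R_0(\xi)<B_l$ gives $\mathcal A_l(0,s)=\mathcal K_l$). Your explicit invocation of the standing hypothesis $R_0(\xi)<B_l$ and the null-set caveat are exactly the details the paper leaves implicit.
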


Note that $\bar{f_l}$ can be interpreted as the largest pico time needed for pico BS $l$ to satisfy  the average requests from $\mathcal K_l$. When $f\in[0,\bar{f_l}]$, $\rho_l(f)$ satisfying \eqref{eqn:rho-equation}  is strictly decreasing  in $f$.
Thus, when $f\in[0,\bar{f_l}]$, $\tau_l(f)$ given by \eqref{eqn:subp-opt} is strictly decreasing   in $f$. When $f>\bar{f_l}$, $\rho_l(f)=0$ and  $\tau_l(f)=(1-S_{C_l})\int_{\mathcal K_l} \frac{1}{WB_l}\lambda(d\xi)$, which is the  average macro time to satisfy the average requests from $\mathcal K_l$ indirectly, 
and  does not change with $f$. In addition, by the structure of Problem~\ref{prob:subprob}, we can easily see that $\tau_l(f)$ is convex over $f$.

Next, we characterize the optimal solution to  Problem~\ref{prob:master}. 
Based on the properties of $\tau_l(f)$ discussed above, we know that  $\tau(f)= f+\tau_0
+\sum_{l=1}^L \tau_l(f)$ is strictly  convex in $f$ over $[0,\bar{f}]$, and is strictly increasing in $f$ when $f>\bar f$. 
Then, we have the following lemma.
%

\begin{Lem} The optimal solution $f^*$ to Problem~\ref{prob:master} exists and is unique. In addition, $f^*$ satisfies:
if $\sum_{l=1}^L \bar{\rho_l}\leq 1$, $f^*=0$ (no pico time) is optimal; $\sum_{l=1}^L \rho_l(\bar f) \geq 1$, $f^*=\bar f$ (all pico time) is optimal; otherwise, $f^*\in (0,\bar f)$ satisfies $\sum_{l=1}^L \rho_l(f^*)=1$, where $\rho_l(f)$ is given by Lemma~\ref{Lem:solution-f}. \label{Lem:opt-f}
\end{Lem}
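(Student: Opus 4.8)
The plan is to exploit the convexity structure of $\tau(f)$ established just before the lemma, and to reduce the optimization to a first-order condition that can be read off from the derivative of $\tau$. First I would establish existence and uniqueness: since $\tau(f)$ is continuous on $[0,\infty)$, strictly convex on $[0,\bar f]$, and strictly increasing for $f>\bar f$, it is strictly convex on the whole half-line and coercive, so it attains its minimum at a unique point $f^*\in[0,\bar f]$. (The fact that $f^*\le \bar f$ follows because $\tau$ is increasing beyond $\bar f$, consistent with the earlier observation that $f$ may be restricted to $[0,\bar f]$ without loss.)

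Next I would compute the one-sided derivatives of $\tau(f)=f+\tau_0+\sum_l \tau_l(f)$. The key is to differentiate $\tau_l(f)$ using the expression \eqref{eqn:subp-opt} together with the defining relation \eqref{eqn:rho-equation} for $\rho_l(f)$. Differentiating \eqref{eqn:subp-opt} in $f$, the boundary terms from moving the threshold $\rho_l(f)$ combine: on the set where the threshold crosses, the integrand switches between $\frac{1}{WR_0(\xi)}$ (or $\frac{1}{WB_l}$ for uncached) and $0$, and the rate at which the region $\mathcal A_l(\rho,s)$ grows is controlled by $\frac{d}{df}$ of \eqref{eqn:rho-equation}. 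A short computation — using the definitions $\rho_l(\xi,s)=\frac{R_l(\xi)}{R_0(\xi)}-\frac{(1-s)R_l(\xi)}{B_l}$ so that at a boundary point $\xi$ with $\rho_l(\xi,1)=\rho_l(f)$ one has $\frac{1}{WR_0(\xi)}=\frac{\rho_l(f)}{WR_l(\xi)}$, and at a boundary point with $\rho_l(\xi,0)=\rho_l(f)$ one has $\frac{1}{WR_0(\xi)}-\frac{1}{WB_l}=\frac{\rho_l(f)}{WR_l(\xi)}$ — shows that $\tau_l'(f)=-\rho_l(f)$ for $f\in(0,\bar f_l)$, and $\tau_l'(f)=0$ for $f>\bar f_l$ (consistent with Corollary~\ref{Cor:solu-f0fbar}). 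Hence $\tau'(f)=1-\sum_{l=1}^L\rho_l(f)$ on the relevant range. Since each $\rho_l(f)$ is continuous and (on $[0,\bar f_l]$) strictly decreasing, with $\rho_l(0)$ bounded above by $\bar\rho_l$ and $\rho_l(\bar f_l)=0$, the function $\sum_l\rho_l(f)$ is non-increasing.

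Then the three cases follow from the standard optimality conditions for a convex function on $[0,\bar f]$. If $\sum_l\bar\rho_l\le 1$, then the right derivative at $0$ satisfies $\tau'(0^+)=1-\sum_l\rho_l(0)\ge 1-\sum_l\bar\rho_l\ge 0$, so $\tau$ is non-decreasing and $f^*=0$. If $\sum_l\rho_l(\bar f)\ge 1$, then $\tau'(\bar f^-)=1-\sum_l\rho_l(\bar f)\le 0$, so $\tau$ is non-increasing on $[0,\bar f]$ and $f^*=\bar f$. In the remaining case $\tau'(0^+)<0<\tau'(\bar f^-)$... more carefully, $\sum_l\rho_l(0)>1$ and $\sum_l\rho_l(\bar f)<1$, so by continuity and monotonicity of $\sum_l\rho_l$ there is a unique $f^*\in(0,\bar f)$ with $\sum_l\rho_l(f^*)=1$, i.e. $\tau'(f^*)=0$, which by strict convexity is the unique global minimizer.

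The main obstacle I anticipate is the differentiation step for $\tau_l(f)$: one must justify interchanging the derivative with the integral when the domain of integration $\mathcal A_l(\rho_l(f),s)$ varies with $f$, which is where the hypothesis that $\int_{\mathcal A_l(\rho,s)}d\xi$ is continuous in $\rho$ (and the uniform bounds on the rates and the density) is used to control the boundary contribution and to guarantee that $\rho_l(f)$ is itself continuous and strictly monotone. A careful treatment would either appeal to a Leibniz-type rule for level-set integrals or argue directly via difference quotients, bounding the incremental region using the continuity assumption; the algebraic cancellation that produces the clean identity $\tau_l'(f)=-\rho_l(f)$ is then the payoff. Everything else — convexity, coercivity, and the case analysis — is routine given the properties of $\tau_l$ already recorded in the text.
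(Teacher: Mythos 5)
Your proposal is correct and follows essentially the same route as the paper: Appendix B also establishes $\tau_l'(f)=-\rho_l(f)$ via difference quotients (using the mean value theorem to capture exactly the boundary identity $\tfrac{1}{WR_0(\xi)}=\tfrac{\rho_l(f)}{WR_l(\xi)}$ for cached files and $\tfrac{1}{WR_0(\xi)}-\tfrac{1}{WB_l}=\tfrac{\rho_l(f)}{WR_l(\xi)}$ for uncached ones that you identify), then concludes $\tau'(f)=1-\sum_l\rho_l(f)$ and runs the same three-case first-order analysis on the convex function $\tau$.
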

\begin{proof}
Please refer to Appendix B.
\end{proof}

Based on Lemma~\ref{Lem:solution-f} and Lemma~\ref{Lem:opt-f}, we characterize  the optimal solution to Problem~\ref{prob:original} in the following theorem.

\begin{Thm} The optimal solution to Problem~\ref{prob:original} is given by
\begin{align}
s_{l,n}^*=&
\begin{cases}1, & n=1,\cdots, C_l\\
0, & \text{otherwise}
\end{cases}\label{eqn:opt-s-final}\\
x_{l,n}^*(\xi)=&
\begin{cases}1, & \rho_{l}(\xi,s_{l,n}^*)> \rho_l(f^*)\\
0, & \text{otherwise}
\end{cases}\label{eqn:opt-x-rho-final}
\end{align}
where $f^*$ given by Lemma~\ref{Lem:opt-f}  satisfies $S_{C_l}\int_{\mathcal A_l(\rho_l(f^*),1)}\frac{1}{WR_l(\xi)} \lambda(d\xi)
+ (1-S_{C_l})\int_{\mathcal A_l(\rho_l(f^*),0)}\frac{1}{WR_l(\xi)} \lambda(d\xi)\leq f^*$ for all $ l\in \mathcal L$.
In addition, the optimal value to Problem~\ref{prob:original} is given by
$\tau^*=f^*+\tau_0+\sum_{l=1}^L\tau_l(f^*)$, where $\tau_l(f)$ is given by Lemma~\ref{Lem:solution-f}.
 \label{Thm:opt-soul}
\end{Thm}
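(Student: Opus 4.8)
The plan is to assemble the theorem directly from the decomposition identity together with Lemma~\ref{Lem:solution-f} and Lemma~\ref{Lem:opt-f}, so that essentially no new calculation is required. First I would make the decomposition rigorous. Fix $f\ge 0$. The objective of Problem~\ref{prob:original} is the sum of the constant $f+\tau_0$ and $L$ groups of integral terms, the $l$-th group depending only on $(\mathbf s_l,\{\mathbf x_l(\xi):\xi\in\mathcal K_l\})$; likewise the constraints \eqref{eqn:pico-f-constr}--\eqref{eqn:cache-constr} never couple two distinct indices $l\neq l'$ (each cell has the \emph{same} pico-time budget $f$, but there is no \emph{joint} budget). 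Hence, for fixed $f$, the feasible set is a Cartesian product over $l$ and the objective is additively separable over $l$, so the inner minimization over $\{(\mathbf s_l,\mathbf x_l)\}_{l\in\mathcal L}$ splits into the $L$ independent problems, each being exactly Subproblem~$l$ at pico time $f$. This yields $\tau^*=\min_{f\ge 0}\big(f+\tau_0+\sum_{l=1}^L\tau_l(f)\big)=\min_{f\ge 0}\tau(f)$, i.e. the equivalence with Problem~\ref{prob:master}.

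Next I would invoke Lemma~\ref{Lem:opt-f} to obtain the unique minimizer $f^*$ of $\tau(f)$ over $f\ge 0$, together with the three stated cases ($f^*=0$, $f^*=\bar f$, or $f^*\in(0,\bar f)$ with $\sum_l\rho_l(f^*)=1$). Then, fixing $f=f^*$, I would apply Lemma~\ref{Lem:solution-f} to each Subproblem~$l$: its optimal value is $\tau_l(f^*)$ and an optimizer is $(s_{l,n}^*,x_{l,n}^*(\xi,f^*))$ as in \eqref{eqn:opt-s}--\eqref{eqn:opt-x-rho}, which is precisely \eqref{eqn:opt-s-final}--\eqref{eqn:opt-x-rho-final}. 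Concatenating $f^*$ with these per-cell optimizers produces a point whose objective value in Problem~\ref{prob:original} equals $f^*+\tau_0+\sum_{l=1}^L\tau_l(f^*)=\min_f\tau(f)=\tau^*$; since, by the decomposition of the previous paragraph, this value is a lower bound on the objective at every feasible point, the constructed point is globally optimal and the displayed $\tau^*$ is the optimal value.

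Finally I would verify that the constructed point is feasible for Problem~\ref{prob:original}, the only nontrivial requirement being the pico-time constraint \eqref{eqn:pico-f-constr} for each $l$ with the common value $f^*$ (this is also the inequality asserted in the theorem statement). I would split on $f^*$ versus $\bar f_l$. If $f^*\le\bar f_l$, then \eqref{eqn:rho-equation} says that the left-hand side of \eqref{eqn:pico-f-constr} evaluated at $\mathbf x_l^*(\cdot,f^*)$ is $S_{C_l}\int_{\mathcal A_l(\rho_l(f^*),1)}\frac{\lambda(d\xi)}{WR_l(\xi)}+(1-S_{C_l})\int_{\mathcal A_l(\rho_l(f^*),0)}\frac{\lambda(d\xi)}{WR_l(\xi)}=f^*$. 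If $f^*>\bar f_l$, then $\rho_l(f^*)=0$, so $\mathcal A_l(\rho_l(f^*),0)=\mathcal A_l(\rho_l(f^*),1)=\mathcal K_l$, $x_{l,n}^*(\cdot,f^*)\equiv 1$ by Corollary~\ref{Cor:solu-f0fbar}, and the left-hand side equals $\bar f_l<f^*$. In either case it is $\le f^*$, as claimed.

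The main obstacle is the decomposition step itself: one must be careful to argue that the budget constraint, although shared in form by all cells, does not couple them, and that exchanging ``minimize, then decompose'' for ``decompose, then minimize'' is legitimate because both the objective and the constraint set factor as products over $l$; everything afterward is bookkeeping on top of Lemmas~\ref{Lem:solution-f} and \ref{Lem:opt-f}. A minor point deserving one sentence is that $x_{l,n}^*(\cdot,f^*)$ need only be pinned down up to the boundary set $\{\xi\in\mathcal K_l:\rho_l(\xi,s_{l,n}^*)=\rho_l(f^*)\}$, which is $\lambda$-null under the assumed continuity of $\rho\mapsto\int_{\mathcal A_l(\rho,s)}d\xi$, so the formula in \eqref{eqn:opt-x-rho-final} does attain the subproblem minimum.
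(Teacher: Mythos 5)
Your proposal is correct and follows essentially the same route as the paper: the paper derives Theorem~\ref{Thm:opt-soul} by combining the master/subproblem decomposition (Problems~\ref{prob:master} and \ref{prob:subprob}) with Lemma~\ref{Lem:solution-f} applied at $f=f^*$ and Lemma~\ref{Lem:opt-f}, exactly as you do. Your added care on the separability of the shared budget $f$, the feasibility check via \eqref{eqn:rho-equation} (with equality when $f^*\le\bar f_l$ and value $\bar f_l<f^*$ otherwise), and the $\lambda$-null boundary set only makes explicit what the paper leaves implicit.
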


\subsubsection{Homogenous Scenario}

Now, we  consider the homogenous scenario across all the pico cells.  Specifically, in this scenario, we have $C_l=C$, $B_l=B$, $\bar{\rho_l}=\bar \rho$, $\underline{\rho_l}=\underline \rho$,  $\bar f_l=\bar f= \int_{\mathcal K_l}\frac{1}{WR_l(\xi)} \lambda(d\xi)$,  and $ \rho_l(f)= \rho(f)$ for all $l\in \mathcal L$, and $ \int_{\mathcal A_l(\rho,s)}\frac{1}{WR_l(\xi)} \lambda(d\xi)$ is the same for all $l\in \mathcal L$. By Lemma~\ref{Lem:opt-f}, we have the following corollary.

\begin{Cor} In the homogenous case, the optimal solution $f^*$ to Problem~\ref{prob:master} satisfies:
if $\bar{\rho}\leq \frac{1}{L}$, $f^*=0$ (no pico time) is optimal;
$\underline{\rho}\geq \frac{1}{L}$, $f^*=\bar f$ (all pico time) is optimal;
otherwise,  $\rho(f^*)=\frac{1}{L}$ and $f^*\in (0,\bar f)$ is given by $f^*= S_{C}\int_{\mathcal A_l(\frac{1}{L},1)}\frac{1}{WR_l(\xi)} \lambda(d\xi)+ (1-S_{C})\int_{\mathcal A_l(\frac{1}{L},0)}\frac{1}{WR_l(\xi)} \lambda(d\xi)
$.
\label{Cor:opt-f-sym}
\end{Cor}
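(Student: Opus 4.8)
The plan is to obtain Corollary~\ref{Cor:opt-f-sym} by specializing Lemma~\ref{Lem:opt-f} to the homogeneous parameters; existence and uniqueness of $f^*$ are then inherited immediately, since the homogeneous scenario is a particular instance of Problem~\ref{prob:master}. The only real work is to rewrite the three threshold conditions of Lemma~\ref{Lem:opt-f}, which are stated in terms of $\sum_{l=1}^L\bar{\rho_l}$, $\sum_{l=1}^L\rho_l(\bar f)$ and the stationarity equation $\sum_{l=1}^L\rho_l(f^*)=1$, in terms of the common quantities $\bar\rho$, $\underline\rho$ and $\rho(f^*)$.

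Two of the three regimes are immediate. In the first regime, $\bar{\rho_l}=\bar\rho$ for every $l$, so $\sum_{l=1}^L\bar{\rho_l}=L\bar\rho$ and the condition $\sum_{l=1}^L\bar{\rho_l}\le 1$ of Lemma~\ref{Lem:opt-f} is exactly $\bar\rho\le\frac{1}{L}$, with the same conclusion $f^*=0$. In the third regime, $\rho_l(f)=\rho(f)$ for every $l$, so the stationarity condition $\sum_{l=1}^L\rho_l(f^*)=1$ reads $L\rho(f^*)=1$, i.e.\ $\rho(f^*)=\frac{1}{L}$; substituting $\rho_l(f^*)=\frac{1}{L}$ and $S_{C_l}=S_C$ into \eqref{eqn:rho-equation}, and using that $\int_{\mathcal A_l(\rho,s)}\frac{1}{WR_l(\xi)}\lambda(d\xi)$ is the same for all $l$, yields the claimed closed form for $f^*$, while $f^*\in(0,\bar f)$ carries over from Lemma~\ref{Lem:opt-f}.

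The one point requiring a short argument is the second regime, where I must show that the condition $\sum_{l=1}^L\rho_l(\bar f)\ge 1$ of Lemma~\ref{Lem:opt-f} is equivalent to $\underline\rho\ge\frac{1}{L}$. Since $\bar f_l=\bar f$ for all $l$ in the homogeneous case, it is enough to prove $\rho_l(\bar f_l)=\underline{\rho_l}$. To this end I would evaluate \eqref{eqn:rho-equation} at $f=\bar f_l=\int_{\mathcal K_l}\frac{1}{WR_l(\xi)}\lambda(d\xi)$: since the weights $S_{C_l}$ and $1-S_{C_l}$ sum to one and $\mathcal A_l(\rho,0)\subseteq\mathcal A_l(\rho,1)\subseteq\mathcal K_l$, the equation forces $\int_{\mathcal A_l(\rho_l(\bar f_l),0)}\frac{1}{WR_l(\xi)}\lambda(d\xi)=\int_{\mathcal K_l}\frac{1}{WR_l(\xi)}\lambda(d\xi)$, i.e.\ $\mathcal A_l(\rho_l(\bar f_l),0)=\mathcal K_l$ up to a $\lambda$-null set. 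Recalling $\mathcal A_l(\rho,0)=\{\xi\in\mathcal K_l:\rho_l(\xi,0)>\rho\}$ and $\underline{\rho_l}=\inf_{\xi\in\mathcal K_l}\rho_l(\xi,0)$, and invoking the assumed continuity of $\rho\mapsto\int_{\mathcal A_l(\rho,s)}d\xi$, this pins down $\rho_l(\bar f_l)=\underline{\rho_l}$. Hence $\sum_{l=1}^L\rho_l(\bar f)=L\underline\rho$, so $\sum_{l=1}^L\rho_l(\bar f)\ge 1$ holds iff $\underline\rho\ge\frac{1}{L}$, which gives the second line of the corollary.

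I expect the main obstacle to be exactly the boundary bookkeeping in the identity $\rho_l(\bar f_l)=\underline{\rho_l}$: one must rule out a gap caused by the strict inequality in the definition of $\mathcal A_l(\rho,s)$ together with the possibility that $\lambda$ places positive mass on the level set $\{\xi\in\mathcal K_l:\rho_l(\xi,0)=\underline{\rho_l}\}$, and this is precisely where the hypothesis that $\int_{\mathcal A_l(\rho,s)}d\xi$ is continuous in $\rho$ is used. Everything else is routine substitution into the conditions of Lemma~\ref{Lem:opt-f}.
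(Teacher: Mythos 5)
Your proof is correct and follows the same route as the paper, which obtains the corollary simply by specializing Lemma~\ref{Lem:opt-f} to the homogeneous parameters. The one detail you work out explicitly---that $\rho_l(\bar f_l)=\underline{\rho_l}$, using the continuity of $\rho\mapsto\int_{\mathcal A_l(\rho,s)}d\xi$ to dispose of the boundary level set---is left implicit in the paper, and your handling of it is sound.
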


Note that in the homogenous case,  the optimal threshold $\rho(f^*)=\frac{1}{L}$  ($\underline{\rho}< \frac{1}{L}<\bar{\rho}$)  reflects the resource  reuse with reuse factor $L$ in the Hetnet with $L$ pico BSs which are spatially separated and can be operated at the same time without mutual interference. Interestingly, the optimal threshold $\rho(f^*)=\frac{1}{L}$ no longer depends on $W$ and $\mathbf C$. Thus, the two pico serving regions for cached and uncached files in each pico cell do not change with $W$ and $\mathbf C$. In the homogenous scenario, we can directly obtain the closed-form optimal solution from Theorem~\ref{Thm:opt-soul} and Corollary~\ref{Cor:opt-f-sym}, without solving $\rho_l(f)$ from \eqref{eqn:rho-equation}.

\subsection{Impact of Bandwidth and Cache Size}\label{subsec:WC}
Problem~\ref{prob:original} is for given $W$ and $\mathbf C$. Thus, we can also write $f^*$, $\tau^*$, $\tau_l$, $\rho(f)$, $\bar f_l$ and $\bar f$ as $f^*(W,\mathbf C)$, $\tau^*(W,\mathbf C)$, $\tau_l(W, C_l)$, $\rho_l(f,W,C_l)$, $\bar f_l(W)$ and $\bar f(W)$, respectively.
From \eqref{eqn:rho-equation}, we can easily observe that  $\rho_l(f,W,C_l)$ increases in $C_l$ and decreases in $W$, when $f\leq \bar f_l(W) $. Thus, the optimal caching adapts to $\mathbf C$, while the optimal  association   adapts to $W$ and $\mathbf C$, via the optimal threshold $\rho_l(f^*(W,\mathbf C),W,C_l)$. In the following, we study how the optimal performance $\tau^*(W,\mathbf C)$ changes with system resources  $W$ and $\mathbf C$. 
We consider continuous relaxation of $C_l\in\{0,\cdots, N\}$ to $C_l\in[0,N]$. Define $S(C_l)=\sum_{n=1}^{\lfloor C_l\rfloor}p_n+(C_l-\lfloor C_l\rfloor)p_{\lceil C_l\rceil}$. Note that $S(C_l)=S_{C_l}$ when $C_l\in\{0,\cdots, N\}$. Replace $S_{C_l}$
with $S(C_l)$ in the corresponding expressions for $C_l\in\{0,\cdots, N\}$  when $C_l\in[0,N]$.
First, for all $W>0$ and $C_l\in (0,N)$, we have the following result.
\begin{Lem} When $f\in (0,\bar f_l(W))$, we have
\begin{align}
&\frac{\partial\tau_l(f,W,C_l)}{\partial W}=-\frac{1}{W}\tau_l(C_l, W,f)-\frac{1}{W}\rho_l(f,W,C_l)f
<0\label{eqn:partial-W}\\
&\frac{\partial\tau_l(f,W,C_l)}{\partial C_l}\nonumber\\
=&p_{\lceil C_l\rceil}\int_{\mathcal A_l(\rho_l(f,W,C_l),1)}\left(\frac{\rho_l(f,W,C_l)}{WR_l(\xi)}-\frac{1}{WR_0(\xi)}  \right)\lambda(d\xi)\nonumber\\
&-p_{\lceil C_l\rceil}\int_{\mathcal A_l(\rho_l(f,W,C_l),0)}\left(\frac{\rho_l(f,W,C_l)}{WR_l(\xi)}-\frac{1}{WR_0(\xi)}  \right)\lambda(d\xi)\nonumber\\
&-p_{\lceil C_l\rceil}\int_{\mathcal A_l(\rho_l(f,W,C_l),0)}\frac{1}{WB_l} \lambda(d\xi)
<0.\label{eqn:partial-S}
\end{align}
When $f> \bar f_l(W)$, we have
\begin{align}
\frac{\partial\tau_l(f,W,C_l)}{\partial W}=&-\frac{1}{W}\tau_l(C_l, W,f)
<0\label{eqn:partial-W-}\\
\frac{\partial\tau_l(f,W,C_l)}{\partial C_l}
=&-p_{\lceil C_l\rceil}\int_{\mathcal K_l} \frac{1}{WB_l}\lambda(d\xi)
<0.\label{eqn:partial-S-}
\end{align}
\label{Lem:derivative}
\end{Lem}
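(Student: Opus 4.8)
The plan is to treat the two regimes of Lemma~\ref{Lem:derivative} separately, reducing both to the closed forms of Lemma~\ref{Lem:solution-f} and Corollary~\ref{Cor:solu-f0fbar} together with the threshold equation \eqref{eqn:rho-equation}. The regime $f>\bar f_l(W)$ is immediate: by Corollary~\ref{Cor:solu-f0fbar}, $\tau_l(f,W,C_l)=(1-S(C_l))\int_{\mathcal K_l}\frac{1}{WB_l}\lambda(d\xi)$, which depends on $W$ only through the factor $1/W$ and on $C_l$ only through $S(C_l)$ (with $\frac{d}{dC_l}S(C_l)=p_{\lceil C_l\rceil}$), so differentiating once gives \eqref{eqn:partial-W-} and \eqref{eqn:partial-S-}; the signs follow from $0<S(C_l)<1$ for $C_l\in(0,N)$, from $p_{\lceil C_l\rceil}>0$, and from $\lambda(\mathcal K_l)>0$.

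For the regime $f\in(0,\bar f_l(W))$, abbreviate $S\triangleq S(C_l)$, $\rho\triangleq\rho_l(f,W,C_l)$, and set $G_s(\rho)\triangleq\int_{\mathcal A_l(\rho,s)}\frac{1}{R_l(\xi)}\lambda(d\xi)$, $H_s(\rho)\triangleq\int_{\mathcal A_l(\rho,s)}\frac{1}{R_0(\xi)}\lambda(d\xi)$, $M(\rho)\triangleq\int_{\mathcal A_l(\rho,0)}\frac{1}{B_l}\lambda(d\xi)$. Then \eqref{eqn:rho-equation} reads $\Phi(\rho,W,C_l)\triangleq S\,G_1(\rho)+(1-S)G_0(\rho)=Wf$, and \eqref{eqn:subp-opt} reads $W\tau_l=\int_{\mathcal K_l}\frac{1}{R_0(\xi)}\lambda(d\xi)-S\,H_1(\rho)-(1-S)H_0(\rho)+(1-S)M(\rho)$. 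The decisive observation is that, directly from the definition of $\rho_l(\xi,s)$, one has the pointwise identities $\frac{1}{R_0(\xi)}=\frac{\rho_l(\xi,1)}{R_l(\xi)}$ and $\frac{1}{R_0(\xi)}=\frac{\rho_l(\xi,0)}{R_l(\xi)}+\frac{1}{B_l}$ for every $\xi\in\mathcal K_l$. Pushing the measure $\frac{1}{R_l(\xi)}\lambda(d\xi)$ forward along $\xi\mapsto\rho_l(\xi,s)$, and using the standing hypothesis that $\int_{\mathcal A_l(\rho,s)}d\xi$ is continuous in $\rho$ (so level sets carry no mass and these quantities are differentiable in $\rho$), a layer-cake computation yields the key relations $\frac{\partial H_1}{\partial\rho}=\rho\,\frac{\partial G_1}{\partial\rho}$ and $\frac{\partial(H_0-M)}{\partial\rho}=\rho\,\frac{\partial G_0}{\partial\rho}$. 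Since $\Phi_\rho=S\,G_1'(\rho)+(1-S)G_0'(\rho)<0$ on $(0,\bar f_l(W))$ — this is the strict monotonicity of $\rho_l(f)$ in $f$ noted after Corollary~\ref{Cor:solu-f0fbar} — the implicit function theorem makes $\rho$ differentiable in $W$ and in $C_l$, with $\Phi_\rho\,\partial_W\rho=f$ and $\Phi_\rho\,\partial_{C_l}\rho=-p_{\lceil C_l\rceil}(G_1-G_0)$.

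It then remains to differentiate $W\tau_l$ by the chain rule and substitute: every $\partial\rho$-term collapses, via the key relations, to $-\rho\,\Phi_\rho\,\partial\rho$, which cancels the $\Phi_\rho$ in the denominator. For $W$ this gives $\tau_l+W\,\partial_W\tau_l=-\rho f$, i.e.\ \eqref{eqn:partial-W}; for $C_l$ it gives $W\,\partial_{C_l}\tau_l=p_{\lceil C_l\rceil}\big[(\rho G_1-H_1)-(\rho G_0-H_0)-M\big]$, which is \eqref{eqn:partial-S} after writing $\rho G_s-H_s=\int_{\mathcal A_l(\rho,s)}\big(\frac{\rho}{R_l(\xi)}-\frac{1}{R_0(\xi)}\big)\lambda(d\xi)$. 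Negativity of \eqref{eqn:partial-W} is clear since $\tau_l,\rho,f>0$. For \eqref{eqn:partial-S}, rewrite the bracket over the nested sets $\mathcal A_l(\rho,0)\subseteq\mathcal A_l(\rho,1)$ as $\int_{\mathcal A_l(\rho,1)\setminus\mathcal A_l(\rho,0)}\big(\frac{\rho}{WR_l(\xi)}-\frac{1}{WR_0(\xi)}\big)\lambda(d\xi)-\int_{\mathcal A_l(\rho,0)}\frac{1}{WB_l}\lambda(d\xi)$; on $\mathcal A_l(\rho,1)$ we have $\rho_l(\xi,1)=R_l(\xi)/R_0(\xi)>\rho$, hence $\frac{\rho}{R_l(\xi)}<\frac{1}{R_0(\xi)}$, so the first integral is $\le 0$ and the second $\ge 0$; the bracket is therefore strictly negative (the relevant $\lambda$-measures are positive for $f\in(0,\bar f_l(W))$), and since $p_{\lceil C_l\rceil}>0$ for $C_l\in(0,N)$, \eqref{eqn:partial-S} is negative.

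I expect the main obstacle to be the rigorous justification of the two shell-derivative identities $H_1'=\rho G_1'$ and $(H_0-M)'=\rho G_0'$, which requires $H_s,G_s,M$ to be (absolutely) differentiable in $\rho$; this rests on the continuity hypothesis for $\int_{\mathcal A_l(\rho,s)}d\xi$ together with enough regularity of $\rho_l(\cdot,s)$ to run the layer-cake change of variables. Once these are secured, everything downstream is chain rule, implicit differentiation, and bookkeeping of the nested integration regions.
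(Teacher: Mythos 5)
Your proposal is correct, and it reaches the same cancellation as the paper but by a recognizably different route. The paper's Appendix~C works with finite differences: it expands $\tau_l(\cdot,W+\delta W,\cdot)-\tau_l(\cdot,W,\cdot)$ (and the analogous difference in $S_{C_l}$), applies the mean value theorem to the ``shell'' integrals over $\mathcal A_l(\rho(W+\delta W),s)-\mathcal A_l(\rho(W),s)$ to pull out intermediate values $\tilde\rho_l,\hat\rho_l$, invokes the constraint \eqref{eqn:rho-equation} to turn the resulting $\int\frac{1}{WR_l}$ terms into differences of $f$'s, and then controls explicit error terms $\Delta(\delta W)/\delta W\to 0$. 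You instead reparametrize everything as functions $G_s,H_s,M$ of the threshold $\rho$, prove the differential identities $H_1'=\rho G_1'$ and $(H_0-M)'=\rho G_0'$ via the pointwise relations $\frac{1}{R_0}=\frac{\rho_l(\xi,1)}{R_l}$ and $\frac{1}{R_0}=\frac{\rho_l(\xi,0)}{R_l}+\frac{1}{B_l}$ (the same identities that power the paper's MVT step), and then combine the implicit function theorem on $\Phi(\rho)=Wf$ with the chain rule so that every $\partial\rho$ term collapses to $-\rho\,\Phi_\rho\,\partial\rho$. Your version makes the cancellation structure transparent, yields both derivatives from one template, and also covers the $f>\bar f_l(W)$ regime explicitly via Corollary~\ref{Cor:solu-f0fbar}, which the paper's appendix leaves implicit; the price is that you must assume $G_s,H_s,M$ are differentiable in $\rho$ (non-atomicity of the pushforward of $\frac{1}{R_l}\lambda$ along $\rho_l(\cdot,s)$), whereas the paper's finite-difference argument nominally needs only the stated continuity of $\int_{\mathcal A_l(\rho,s)}d\xi$ plus Lipschitz-type control of $\rho_l$ --- though in truth the paper leans on comparable unstated regularity in its own error estimates, so neither argument is strictly more rigorous than the other. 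Your sign argument for \eqref{eqn:partial-S} via the decomposition over $\mathcal A_l(\rho,1)\setminus\mathcal A_l(\rho,0)$ matches the paper's and correctly notes that at least one of the two contributions is strictly negative whenever $f\in(0,\bar f_l(W))$.
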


\begin{proof}
Please refer to Appendix C. 
\end{proof}
Based on Lemma~\ref{Lem:derivative}, we know that  for all $W>0$ and $C_l\in (0,N)$, $l\in \mathcal L$,  when $f>0$, we can obtain $\frac{\partial\tau(f,W,\mathbf C)}{\partial W}<0$ and $\frac{\partial\tau(f,W,\mathbf C)}{\partial C_l}<0$, indicating  how fast $\tau(f,W,\mathbf C)$  decreases with $W$ and $\mathbf C$ for given $f>0$. Therefore, we can show that   as $W$ or $\mathbf C$ increases, the minimum total  time decreases.
\begin{Thm} For all $W, W'
>0$ and $C_l, C_l'\in (0,N)$, $l\in \mathcal L$, if $W\geq W'$ and $\mathbf C\succeq \mathbf C'$, then $\tau^*(W,\mathbf C)\leq \tau^*(W',\mathbf C')$, where the equality holds if and only if $W=W'$ and $\mathbf C= \mathbf C'$.\label{Thm:W-C-time}
\end{Thm}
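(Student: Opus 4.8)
The plan is to exploit the representation $\tau^*(W,\mathbf C)=\min_{f\ge 0}\tau(f,W,\mathbf C)$, with $\tau(f,W,\mathbf C)=f+\tau_0+\sum_{l=1}^L\tau_l(f,W,C_l)$, together with Lemma~\ref{Lem:derivative}, using the fact that monotonicity of a family of functions in a parameter is inherited by their pointwise minimum. First I would show that, \emph{for each fixed $f\ge 0$}, the section $\tau(f,\cdot,\cdot)$ is non-increasing in $W$ and in each $C_l$. On the open $f$-ranges $\big(0,\bar f_l(W)\big)$ and $\big(\bar f_l(W),\infty\big)$ this is immediate from \eqref{eqn:partial-W}--\eqref{eqn:partial-S-}; at the two leftover values $f=0$ and $f=\bar f_l(W)$ I would use the closed forms in Corollary~\ref{Cor:solu-f0fbar} (at $f=0$, $\tau_l(0,W)=\int_{\mathcal K_l}\frac{\lambda(d\xi)}{WR_0(\xi)}$ is decreasing in $W$ and constant in $C_l$; at $f=\bar f_l(W)$ use continuity of $\tau_l$ in $f$, which follows from the convexity noted after Corollary~\ref{Cor:solu-f0fbar}). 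Since $\tau_0$ is $\mathbf C$-independent and non-increasing in $W$, this gives $\tau(f,W,\mathbf C)\le\tau(f,W',\mathbf C')$ for every $f\ge0$ whenever $W\ge W'$ and $\mathbf C\succeq\mathbf C'$; taking the infimum over $f$ first on the right, then on the left, yields the non-strict inequality $\tau^*(W,\mathbf C)\le\tau^*(W',\mathbf C')$.

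For the equality claim the ``if'' direction is trivial, so the content is strictness when $(W,\mathbf C)\ne(W',\mathbf C')$. Let $f'$ be the (unique, by Lemma~\ref{Lem:opt-f}) minimiser of the primed instance, so $\tau^*(W',\mathbf C')=\tau(f',W',\mathbf C')$ and $\tau^*(W,\mathbf C)\le\tau(f',W,\mathbf C)$. If $f'>0$, I would change the coordinates from the primed to the unprimed values one at a time and invoke the \emph{strict} sign in Lemma~\ref{Lem:derivative}. Along $W$: $\bar f_l(W)$ is monotone in $W$, so as $W$ moves from $W'$ to $W$ the point $f'$ either stays inside one differentiable regime or crosses the regime boundary once, where the two expressions for $\tau_l$ agree; since the derivative is strictly negative on each regime, $\tau_l(f',\cdot,C_l)$ is continuous and strictly decreasing in $W$. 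Along each $C_l$: $\bar f_l$ does not depend on $C_l$, so $f'$ stays in one regime and $\partial\tau_l/\partial C_l<0$ applies directly (here $1-S(C_l)>0$ because $C_l\in(0,N)$). Hence $\tau(f',W,\mathbf C)<\tau(f',W',\mathbf C')=\tau^*(W',\mathbf C')$ as soon as some coordinate strictly increases, which is the claim.

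The remaining case $f'=0$ is, I expect, the main obstacle. Here $\tau^*(W',\mathbf C')=\tau_0+\sum_l\tau_l(0,W')=\int_{\mathcal K}\frac{\lambda(d\xi)}{W'R_0(\xi)}$, which depends only on $W'$ and is strictly decreasing in it ($\lambda(d\xi)$ is bounded away from $0$, $R_0$ is bounded, and $\mathcal K$ has positive measure), so $W>W'$ already forces $\tau^*(W,\mathbf C)\le\int_{\mathcal K}\frac{\lambda(d\xi)}{WR_0(\xi)}<\tau^*(W',\mathbf C')$. The genuinely delicate sub-case is $W=W'$ with $\mathbf C\succ\mathbf C'$: since $f'=0$ is optimal precisely when $\tau'(0^+)=1-\sum_l\bar\rho_l\ge0$ (read off from Lemma~\ref{Lem:opt-f} and strict convexity of $\tau$), and the $\bar\rho_l$ do not depend on $W$ or $\mathbf C$, the value $f=0$ is then also optimal at $(W,\mathbf C)$, so strictness in the $\mathbf C$-direction can only be obtained under the extra hypothesis that pico time is useful, i.e.\ $\sum_l\bar\rho_l>1$ (equivalently $f^*>0$ at every resource level), which puts us back in the $f'>0$ case. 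I would therefore either add this hypothesis explicitly or restrict the equality statement to that non-degenerate regime; with it in place, everything else follows mechanically from the pointwise-monotonicity argument above and the strict sign in Lemma~\ref{Lem:derivative}.
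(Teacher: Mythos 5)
Your argument for the non-strict inequality is the same as the paper's: establish that for each fixed $f$ the section $\tau(f,\cdot,\cdot)$ is non-increasing in $W$ and in each $C_l$ via Lemma~\ref{Lem:derivative}, then evaluate at the primed minimiser $f^*(W',\mathbf C')$ and chain $\tau^*(W',\mathbf C')=\tau(f^*(W',\mathbf C'),W',\mathbf C')\geq\tau(f^*(W',\mathbf C'),W,\mathbf C)\geq\tau^*(W,\mathbf C)$. Where you genuinely add value is on the ``equality if and only if'' clause, which the paper's one-line proof does not actually establish: Lemma~\ref{Lem:derivative} only gives strict derivatives for $f>0$, and you are right that the claim as stated fails in the degenerate regime. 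Indeed, $\bar{\rho_l}=\sup_\xi R_l(\xi)/R_0(\xi)$ depends on neither $W$ nor $\mathbf C$, so if $\sum_l\bar{\rho_l}\leq 1$ then $f^*=0$ for every resource level and $\tau^*(W,\mathbf C)=\int_{\mathcal K}\frac{1}{WR_0(\xi)}\lambda(d\xi)$ is independent of $\mathbf C$; taking $W=W'$ and $\mathbf C\succ\mathbf C'$ then gives equality with unequal parameters, contradicting the ``only if'' direction. Your proposed fix (assume $\sum_l\bar{\rho_l}>1$, or equivalently restrict to the regime where $f^*>0$) is the right one, and your handling of the regime boundary $\bar f_l(W)$ when sliding $W$ from $W'$ to $W$ (continuity of $\tau_l$ across the boundary plus strict negativity of the derivative on each side) is sound. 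In short: same core argument as the paper, but your treatment of strictness is more careful than the published proof and correctly flags a needed hypothesis.
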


\begin{proof}
Please refer to Appendix D.
\end{proof}

\section{Numerical Results}

\begin{table}[!t]
\centering
\caption{Simulation parameters.}\label{tab:aTable}
\scriptsize{\begin{tabular}{|c|c|c|c|}
\hline
Macro Tx. Power& 46 dBm & Pico Tx. Power & 30dBm\\
\hline
Macro-MS Ant. Gain & 14dBi & Pico-MS Ant. Gain &5dBi\\
\hline
Macro-Pico Ant. Gain & 17dBi & Noise Power &-104 dBm\\
\hline
\multicolumn{4}{|c|}{Macro pathloss (in dB) = 128.1+37.6$\log_{10}$(d/1000),$d>35$ m}\\
\hline
\multicolumn{4}{|c|}{Pico pathloss (in dB) = 140.7+36.7$\log_{10}$(d/1000),$d>10$ m}\\
\hline
\end{tabular}}\label{tab}
\end{table}

In this section, we illustrate the analytical results
via numerical examples.
Consider a circular macrocell, with three pico BSs ($L = 3$),
each of which is deployed at the centre of a circular hotspot. A hotspot is a region with higher user density
(explained further below). The macro cell has radius 1 km, and
each hotspot  has radius 150 m. The macro BS is located at the
origin and the pico BSs are located at  $(-339,741),(218,-230),(561,-457)$ in the heterogenous scenario and at $(450,0),(-225,-390), (-225,390)$ in the homogenous scenario.
In our simulations, all users are assigned to
a pico BS, i.e., there are no macro only users ($\mathcal K_0=\emptyset$).
Pico assignment is made according to the nearest (strongest)
pico, forming
Voronoi regions $\mathcal K_1,\mathcal K_2, \mathcal K_3$. The hotspot probabilities for the three pico cells are given by 0.4, 0.25, 0.15 (heterogenous scenario)  and 0.8/3, 0.8/3, 0.8/3 (homogeneous scenario). An arriving file is assigned to a given hotspot
according to its probability independently of other files.
There is a chance of 0.2 that an arriving file falls
outside any hotspot. In this case, the corresponding file is
assigned to the non-hotspot area. Once the region of an arriving
file is determined, the actual location is chosen uniformly
at random, with the exception that no mobile is placed within
10 m of any pico BS or within 35 m from the macro BS. We choose $N=1000$, file size $D=4$ Mbits and $\lambda_s=1$ file/sec.  We assume the file popularity follows Zipf distribution, i.e., $p_n=\frac{n^{-\gamma}}{\sum_{n\in \mathcal N}n^{-\gamma}}$, with  Zipf exponent $\gamma=0.8$.
We consider  the wireless parameters and propagation models in Table~\ref{tab},
which are from the 3GPP release.
Hence, once the location $\xi$ of a
user has been given,
the SNR of the user from the macro BS
 $\text{SNR}_0(\xi)$  and from its pico BS $l$
 $\text{SNR}_l(\xi)$ can be determined using the cell geometry. Given these
SNRs, we obtain the macro and pico rates (in file/sec/Hz) using
Shannon's formula: $R_0(\xi)=\frac{1}{D}\log_2(1+\text{SNR}_0(\xi))$ and $R_l(\xi)=\frac{1}{D}\log_2(1+\text{SNR}_l(\xi))$.
  Similarly, given the location  of pico BS $l$,  the SNR of pico BS $l$ from the macro BS  $\text{SNR}_{0l}$ can be determined, based on which we can obtain the backhaul rate  $B_l=\frac{1}{D}\log_2(1+\text{SNR}_{0l})$.

We use Monte-Carlo simulation to estimate all relevant integrals.
Specifically, the whole macro coverage area is randomly
sampled with $2\times10^5$ points using the hotspot probabilities.
The macro and pico rates for each sampling position are then
calculated accordingly. Then, the locations of the arrivals are chosen independently and uniformly from these sampling positions. All the integrals are estimated using corresponding summations.
%
To avoid superfluous computation, for any given $W$ and $C_l$, $f$ is calculated using a fine grid of
$\rho_l$ values according to \eqref{eqn:rho-equation}, based on which we obtain $\rho_l(f)$.


Fig.~\ref{fig:asym} (a) and Fig.~\ref{fig:sym} (a)  illustrate the threshold $\rho_l(f)$ versus pico time $f$ for the heterogenous and homogenous scenarios, respectively.
We can see that   $\rho_l(f)$  is strictly monotonically decreasing with maximum possible threshold achieved at $f=0$, reaching 0 at some finite value of $\bar f_l$. In addition, in the homogenous scenario, the three threshold curves coincide.
From Fig.~\ref{fig:asym} (a) (Fig.~\ref{fig:sym} (a)), we can immediately obtain the first figure in Fig.~\ref{fig:asym} (b)   (Fig.~\ref{fig:sym} (b)), which enables us to determine the unique optimal $f^*$ using Lemma~\ref{Lem:opt-f}. The second figure in Fig.~\ref{fig:asym} (b) (Fig.~\ref{fig:sym} (b)) also illustrates the unique optimal $f^*$ at which the minimum of $\tau(f)$ is achieved. The two optimal values of $f^*$ from the two figures coincide, illustrating Lemma~\ref{Lem:opt-f}. In addition, from Fig.~\ref{fig:sym} (b),  we can see that $\rho_l(f^*)=\frac{1}{L}=1/3$ in the homogenous scenario, which illustrating Corollary~\ref{Cor:opt-f-sym}.
Fig.~\ref{fig:WC} illustrates the optimal average time $\tau^*(W,\mathbf C)$ versus $W$ and $\mathbf C$. We can observe that as $W$ or $\mathbf C$ increases, the minimum  total  time decreases.  This illustrates Theorem~\ref{Thm:W-C-time}. In addition, we can see that when $W=1$ MHz, caching the 200 most popular files leads to a minimum total time reduction of $(0.2786-0.2059)/0.2786=26.1\%$ compared to $C_l=0$; when $C_l=0$,  increasing bandwidth from 1MHz to  1.4 MHz   leads to a minimum total time reduction of $(0.2786-0.1990)/0.2786=28.6\%$. Therefore, caching 200 files results in the performance improvement close to that offered  by using 0.4 MHz extra bandwidth. This demonstrates the  effectiveness of caching in Hetnets.

\begin{figure}
\begin{center}
  \subfigure[\small{$\rho_l(f)$ versus $f$.}]
  {\resizebox{4.5cm}{!}{\includegraphics{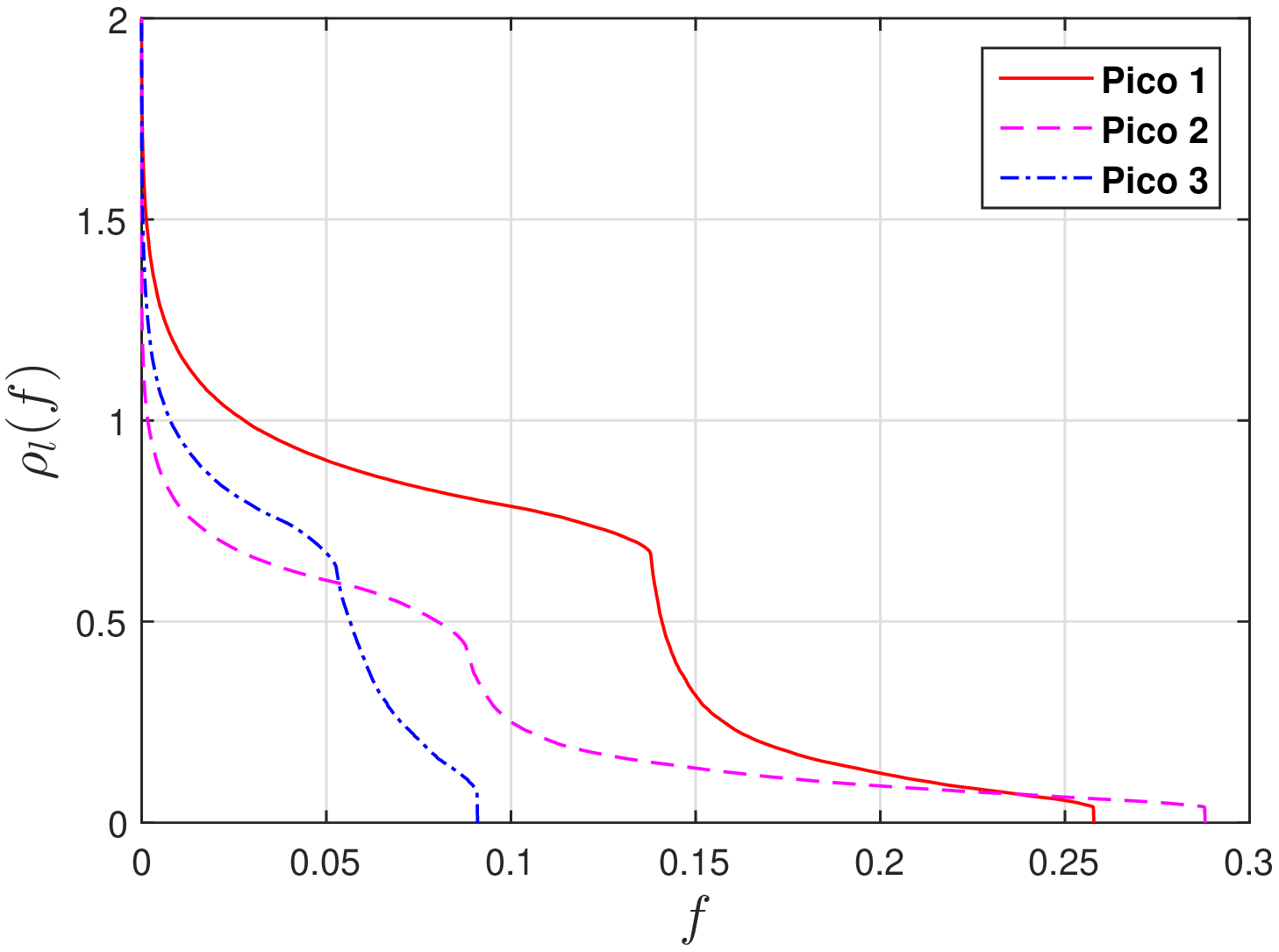}}}
\quad
  \subfigure[\small{$\sum_{l=1}^L\rho_l(f)$ and $\tau(f)$ versus $f$.}]
  {\resizebox{4.8cm}{!}{\includegraphics{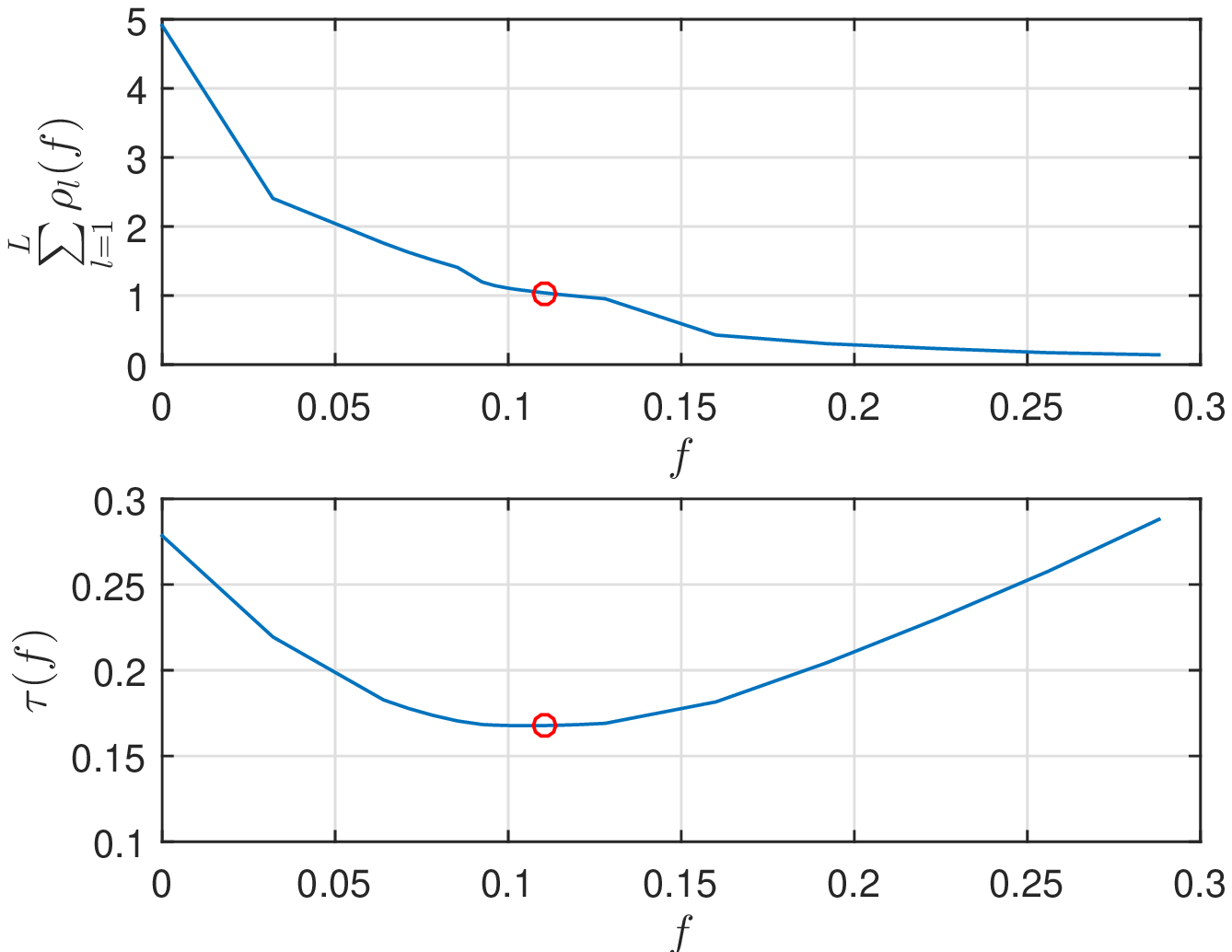}}}
  \end{center}
    \caption{\small{Heterogeneous scenario.  $W=1$ MHz and $C_l=200$, $l\in \mathcal L$.}}
\label{fig:asym}
\end{figure}

\begin{figure}
\begin{center}
  \subfigure[\small{$\rho_l(f)$ versus $f$.}]
  {\resizebox{4.5cm}{!}{\includegraphics{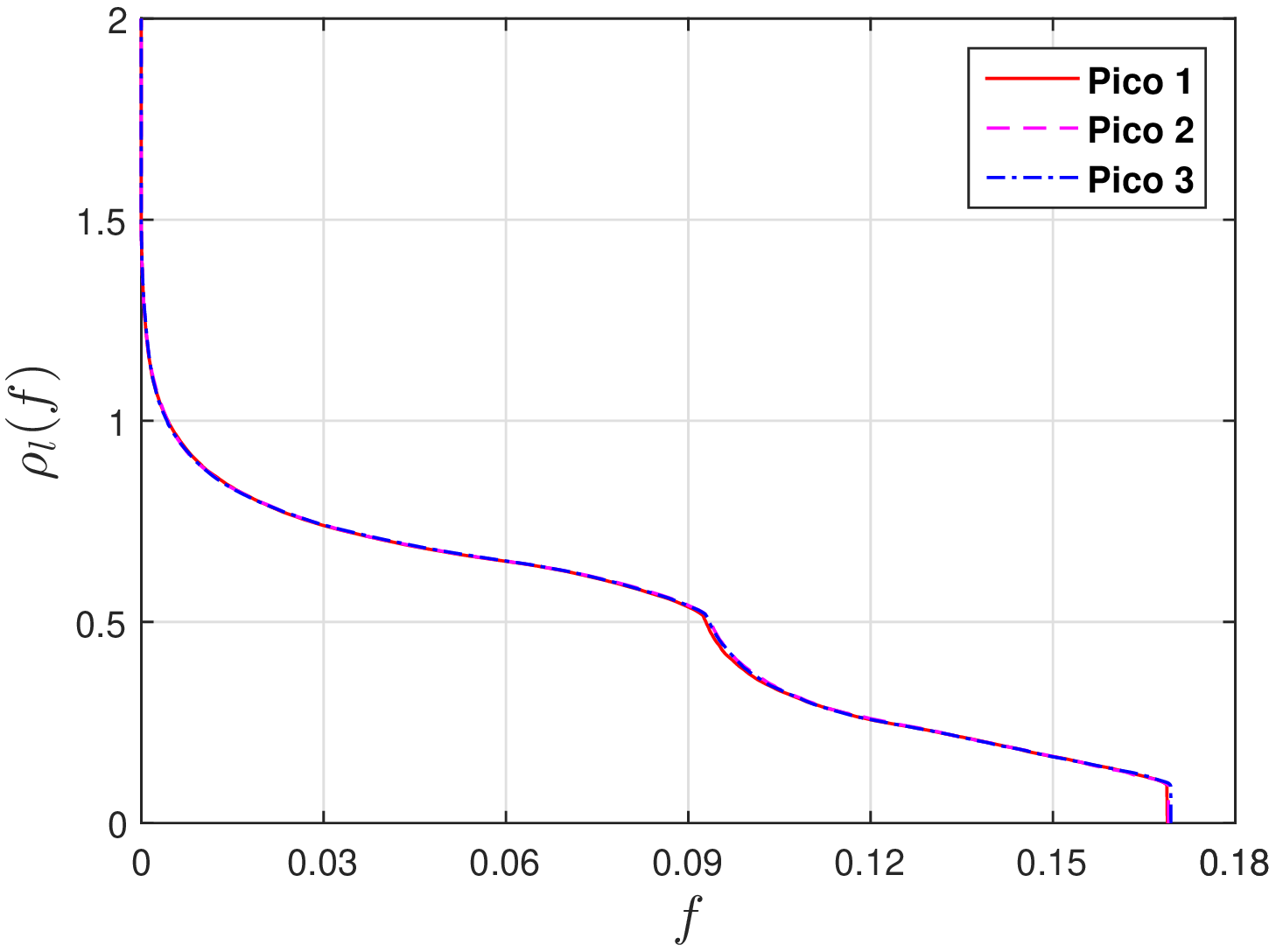}}}
\quad
  \subfigure[\small{$\sum_{l=1}^L\rho_l(f)$ and $\tau(f)$ versus $f$.}]
  {\resizebox{4.8cm}{!}{\includegraphics{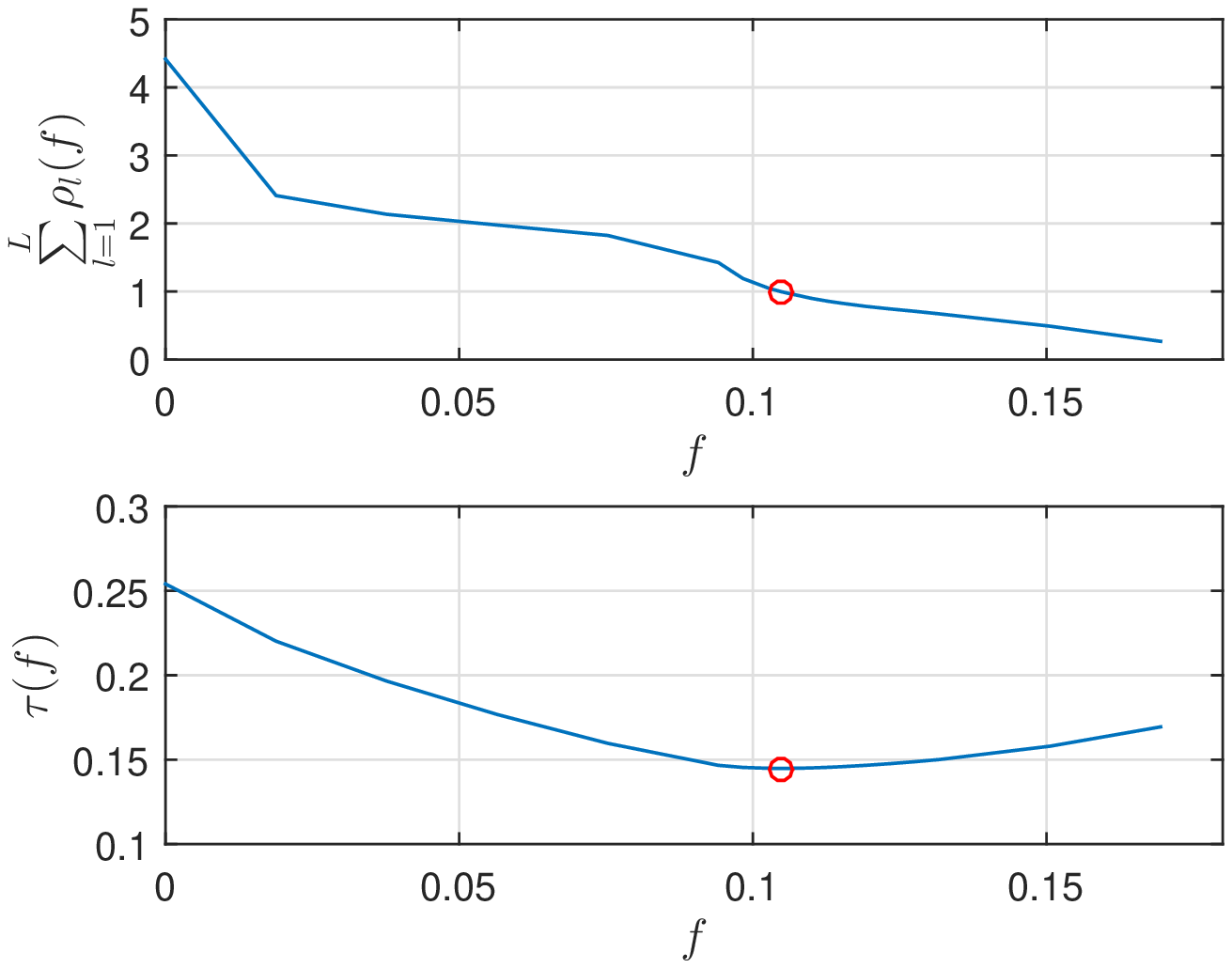}}}
  \end{center}
    \caption{\small{Homogenous scenario. $W=1$ MHz and  $C_l=200$, $l\in \mathcal L$.}}
\label{fig:sym}
\end{figure}

\begin{figure}
\begin{center}
 \includegraphics[width=5.7cm]{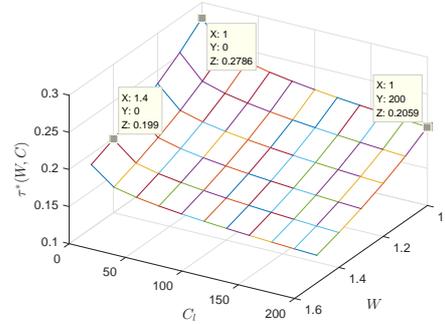}
  \end{center}
    \caption{\small{Minimum total time versus $W$ (MHz) and $C_l$ (files).  $C_l$, $l\in \mathcal L$ are the same.}}
\label{fig:WC}
\end{figure}

\section*{Appendix A: Proof of Lemma~\ref{Lem:solution-f}}

We shall solve a relaxed version of Problem~\ref{prob:subprob} where \eqref{eqn:s-constr} is replaced with 
\begin{align}
s_{l,n}\in[0,1],\quad   n\in \mathcal N.\label{eqn:s-constr-cont}
\end{align}
We shall show that the optimal solution to the relaxed problem satisfies \eqref{eqn:s-constr}, and hence is also the optimal solution to Problem~\ref{prob:subprob}.

Let $\rho_l$ denote the Lagrangian multiplier w.r.t. \eqref{eqn:pico-f-constr}. The corresponding Lagrangian is given by
\begin{align}
&\mathcal L(\{\mathbf x_l(\xi)\},\mathbf{s}_l, \rho_l)\nonumber\\
=& \sum_{n=1}^N\int_{\mathcal K_l}\frac{1-x_{l,n}(\xi)}{WR_0(\xi)}p_{n} \lambda(d\xi)\nonumber\\
&+ \sum_{n=1}^N\int_{\mathcal K_l}\frac{(1-s_{l,n})x_{l,n}(\xi)}{WB_l}p_{n} \lambda(d\xi)\nonumber\\
&+\rho_l\left(\sum_{n=1}^N\int_{\mathcal K_l}\frac{x_{l,n}(\xi)}{WR_l(\xi)}p_{n} \lambda(d\xi)-f \right)\nonumber\\
=& \sum_{n=1}^N\int_{\mathcal K_l}\left(\frac{\rho_l}{WR_l(\xi)}+\frac{1-s_{l,n}}{WB_l}-\frac{1}{WR_0(\xi)}\right)x_{l,n}(\xi)p_{n} \lambda(d\xi)\nonumber\\
&+  \sum_{n=1}^N\int_{\mathcal K_l}\frac{1}{WR_0(\xi)}p_{n} \lambda(d\xi)-\rho_lf\nonumber
\end{align}
where  $\{\mathbf x_l(\xi)\}$ and $\mathbf s_l$ satisfy \eqref{eqn:x-constr}, \eqref{eqn:s-constr-cont}, and \eqref{eqn:cache-constr}.  Minimizing the Lagrangian w.r.t. $\{\mathbf x_l(\xi)\}$ subject to \eqref{eqn:x-constr}, we have
\begin{align}
x_{l,n}(\xi, s_{l,n}, \rho_l) =&
\begin{cases}1, & \rho_{l}(\xi,s_{l,n})> \rho_l\\
0, & \text{otherwise}
\end{cases}
\end{align}
and
\begin{align}
&\min_{\{\mathbf x_l(\xi):\eqref{eqn:x-constr} \}}\mathcal L(\{\mathbf x_l(\xi)\},\mathbf s_l, \rho_l)\nonumber\\
=& \sum_{n=1}^Np_{n} \int_{\mathcal A_l(\rho_l,s_{l,n})}\left(\frac{\rho_l}{WR_l(\xi)}+\frac{1-s_{l,n}}{WB_l}-\frac{1}{WR_0(\xi)}\right)\lambda(d\xi)\nonumber\\
&+ \sum_{n=1}^N\int_{\mathcal K_l}\frac{1}{WR_0(\xi)}p_{n} \lambda(d\xi)-\rho_lf.\label{eqn:min-L-x}
\end{align}
Denote $a(s)\triangleq \int_{\mathcal A_l(\rho_l,s)}\left(\frac{\rho_l}{WR_l(\xi)}+\frac{1-s}{WB_l}-\frac{1}{WR_0(\xi)}\right)\lambda(d\xi)$. Now we show that  $a(s)$ is decreasing in $s$. Suppose $s>s'\geq 0$. Then, we have $\mathcal A_l(\rho_l,s')\subset\mathcal A_l(\rho_l,s)$, implying $\mathcal A_l(\rho_l,s)-\mathcal A_l(\rho_l,s')\subset \mathcal A_l(\rho_l,s)$. Thus, we have
\begin{align}
&a(s)-a(s')\nonumber\\
=&\int_{\mathcal A_l(\rho_l,s)-\mathcal A_l(\rho_l,s')}\left(\frac{\rho_l}{WR_l(\xi)}+\frac{1-s}{WB_l}-\frac{1}{WR_0(\xi)}\right)\lambda(d\xi)\nonumber\\
&-\int_{\mathcal A_l(\rho_l,s')}\left(\frac{s-s'}{WB_l}\right)\lambda(d\xi)<0\nonumber
\end{align}
as $\frac{\rho_l}{WR_l(\xi)}+\frac{1-s}{WB_l}-\frac{1}{WR_0(\xi)}<0$ for $\xi\in \mathcal A_l(\rho_l,s)$ and $s-s'>0$.
In addition, since $p_{n}$ is decreasing in $n$, further minimizing \eqref{eqn:min-L-x} w.r.t. $\mathbf s_l$ subject to \eqref{eqn:s-constr-cont} and \eqref{eqn:cache-constr}, we can obtain \eqref{eqn:opt-s} and the dual function
\begin{align}
&g_l(\rho_l,f)\nonumber\\
=&\min_{\{\mathbf x_l(\xi):\eqref{eqn:x-constr} \}, \{\mathbf s_l:\eqref{eqn:s-constr-cont}, \eqref{eqn:cache-constr}\}}\mathcal L(\{\mathbf x_l(\xi)\},\mathbf s_l, \rho_l)\nonumber\\
=&\sum_{n=1}^{C_l}\int_{\mathcal A_l(\rho_l,1)}\left(\frac{\rho_l}{WR_l(\xi)}-\frac{1}{WR_0(\xi)}\right)p_{n} \lambda(d\xi)\nonumber\\
&+\sum_{n=C_l+1}^N\int_{\mathcal A_l(\rho_l,0)}\left(\frac{\rho_l}{WR_l(\xi)}+\frac{1}{WB_l}-\frac{1}{WR_0(\xi)}\right)p_{n} \lambda(d\xi)\nonumber\\
&+ \sum_{n=1}^N\int_{\mathcal K_l}\frac{1}{WR_0(\xi)}p_{n} \lambda(d\xi)-\rho_lf\label{eqn:dual-func}
\end{align}
which is convex in $\rho_l$. On differentiating w.r.t. $\rho_l$ under the integral sign, we have
\begin{align}
g_l'(\rho_l,f)=&\sum_{n=1}^{C_l}\int_{\mathcal A_l(\rho_l,1)}\frac{1}{WR_l(\xi)}p_{n} \lambda(d\xi)\nonumber\\
&+\sum_{n=C_l+1}^N\int_{\mathcal A_l(\rho_l,0)}\frac{1}{WR_l(\xi)}p_{n} \lambda(d\xi)-f
\end{align}
which is continuous in $\rho_l$.  Note that $g_l'(\rho_l,f)\in [-f, \bar{f_l}-f]$. The dual problem is given by
\begin{align}
\tau_l(f)=\max_{\rho_l\geq 0}g_l(\rho_l,f). 
\end{align}
Since the dual problem is convex, when $f\leq\bar{f_l}$, the optimal dual variable satisfies $g_l'(\rho_l,f)=0$. When $f>\bar{f_l}$, $g_l'(\rho_l,f)<0$, the optimal dual variable is 0. Thus, the optimal dual value is given by
\begin{align}
\tau_l(f)=&\int_{\mathcal K_l}\frac{1}{WR_0(\xi)} \lambda(d\xi)-S_{C_l}\int_{\mathcal A_l(\rho_l(f),1)}\frac{1}{WR_0(\xi)} \lambda(d\xi)\nonumber\\
&-(1-S_{C_l})\int_{\mathcal A_l(\rho_l(f),0)}\left(\frac{1}{WR_0(\xi)} - \frac{1}{WB_l}\right)\lambda(d\xi)\nonumber
\end{align}
which is equal to \eqref{eqn:subp-opt}.
 The corresponding $\{\mathbf x_l^*(\xi)\}$ given by \eqref{eqn:opt-x-rho} (with $\mathbf s_l$ given by \eqref{eqn:opt-s}) is feasible. Therefore, by the Lagrange Sufficiency Theorem, \eqref{eqn:opt-x-rho} and \eqref{eqn:opt-s} are the optimal solution to the relaxed  version of Problem~\ref{prob:subprob} for any given $f\geq 0$. Note that  $\mathbf s_l^*$ given by \eqref{eqn:opt-s} satisfies \eqref{eqn:s-constr}. Thus,\eqref{eqn:opt-x-rho} and   \eqref{eqn:opt-s} are also  the optimal solution to Problem~\ref{prob:subprob} for any given $f\geq 0$. As $x_{l,n}^*(\xi)$ is a  function of $f$, we also write it as  $x_{l,n}^*(\xi, f)$.

\section*{Appendix B: Proof of Lemma~\ref{Lem:opt-f}}

Given $f\in(0, \bar{f_l})$, for suitable $\delta f>0$, we have
\begin{align}
&\tau_l(f)-\tau_l(f-\delta f)\nonumber\\
=&
-S_{C_l}\int_{\substack{\mathcal A_l(\rho_l(f),1)\\-\mathcal A_l(\rho_l(f-\delta f),1)}}\frac{1}{WR_0(\xi)}\lambda(d\xi)\nonumber\\
&-(1-S_{C_l})\int_{\substack{\mathcal A_l(\rho_l(f),0)\\-\mathcal A_l(\rho_l(f-\delta f),0)}}\left(\frac{1}{WR_0(\xi)}-\frac{1}{WB_l}\right) \lambda(d\xi)\nonumber\\
=&-S_{C_l}\int_{\{\xi\in \mathcal K_l:\rho(\xi,1)\in [\rho_l(f),\rho_l(f-\delta f)] \}}\frac{\rho(\xi,1)}{WR_l(\xi)}\lambda(d\xi)\nonumber\\
&-(1-S_{C_l})\int_{\{\xi\in \mathcal K_l:\rho(\xi,0)\in [\rho_l(f),\rho_l(f-\delta f)] \}}\frac{\rho(\xi,0)}{WR_l(\xi)}\lambda(d\xi)\nonumber\\
\stackrel{(a)}{=}&-\tilde{\rho_l}S_{C_l}\int_{\{\xi\in \mathcal K_l:\rho(\xi,1)\in [\rho_l(f),\rho_l(f-\delta f)] \}}\frac{1}{WR_l(\xi)}\lambda(d\xi)\nonumber\\
&-\hat{\rho_l}(1-S_{C_l})\int_{\{\xi\in \mathcal K_l:\rho(\xi,0)\in [\rho_l(f),\rho_l(f-\delta f)] \}}\frac{1}{WR_l(\xi)}\lambda(d\xi)\nonumber\\
=&-\tilde{\rho_l}S_{C_l}\Bigg(\int_{\mathcal A_l(\rho_l(f),1)}\frac{1}{WR_l(\xi)}\lambda(d\xi)\nonumber\\
&\hspace{15mm} -\int_{\mathcal A_l(\rho_l(f-\delta f),1)}\frac{1}{WR_l(\xi)}\lambda(d\xi)\Bigg)\nonumber\\
&-\tilde{\rho_l}(1-S_{C_l})\Bigg(\int_{\mathcal A_l(\rho_l(f),0)}\frac{1}{WR_l(\xi)}\lambda(d\xi)\nonumber\\
&\hspace{22mm}-\int_{\mathcal A_l(\rho_l(f-\delta f),0)}\frac{1}{WR_l(\xi)}\lambda(d\xi)\Bigg)+\Delta(\delta f)\nonumber
\end{align}
\begin{align}
=&-\tilde{\rho_l}\Bigg(S_{C_l}\int_{\mathcal A_l(\rho_l(f),1)}\frac{1}{WR_l(\xi)} \lambda(d\xi)\nonumber\\
&\hspace{10mm}+(1-S_{C_l})\int_{\mathcal A_l(\rho_l(f),0)}\frac{1}{WR_l(\xi)}\lambda(d\xi)\Bigg)\nonumber\\
&+\tilde{\rho_l}\Bigg(S_{C_l}\int_{\mathcal A_l(\rho_l(f-\delta f),1)}\frac{1}{WR_l(\xi)} \lambda(d\xi)\nonumber\\
&\hspace{10mm}+(1-S_{C_l})\int_{\mathcal A_l(\rho_l(f-\delta f),0)}\frac{1}{WR_0(\xi)} \lambda(d\xi)\Bigg)+\Delta(\delta f)\nonumber\\
=&-\tilde{\rho_l}f+\tilde{\rho_l}(f-\delta f)+\Delta(\delta f)=-\tilde{\rho_l}\delta f+\Delta(\delta f)\label{eqn:proof-dif-tau}
\end{align}
where (a) is due to mean value theorem and $\tilde{\rho_l}, \hat{\rho_l}\in [\rho_l(f), \rho_l(f-\delta f)]$, and $\Delta (\delta f)$ is defined as
\begin{align}
&\Delta (\delta f)\nonumber\\
\triangleq& 
 (\tilde{\rho_l}-\hat{\rho_l})(1-S_{C_l}) \int_{\mathcal A_l(\rho_l(f),0)-\mathcal A_l(\rho_l(f-\delta f),0)}\frac{1}{WR_l(\xi)}\lambda(d\xi)
\end{align}
Since $\tilde{\rho_l}, \hat{\rho_l}\in [\rho_l(f), \rho_l(f-\delta f)]$, we have $|\tilde{\rho_l}-\hat{\rho_l}|\leq  \rho_l(f-\delta f)-\rho_l(f)$. As $\rho_l(f)$ has a finite unique derivative for almost all $f$, we have $ \rho_l(f-\delta f)-\rho_l(f)\leq C\delta f$ for some finite positive $C$. Thus, we can show $\Delta (\delta f)/\delta f\to 0$ as $\delta f\to 0$.
Dividing $\delta f$ on both sides of \eqref{eqn:proof-dif-tau} and taking $\delta f\to 0$, we can show $\tau_l'(f)=-\rho_l(f)$ for all $f\in (0,\bar f_l)$. On the other hand, when $f\in (f_l,\bar f)$,  we have $\tau_l'(f)=0$ by Corollary~\ref{Cor:solu-f0fbar} and $\rho_l(f)=0$ by Lemma~\ref{Lem:solution-f}. Thus, when $f\in (f_l,\bar f)$, we can show  $\tau_l'(f)=-\rho_l(f)$. Therefore,  we have $\tau'(f)=1-\sum_{l=1}^L \rho_l(f)$ for all $f\in(0,\bar f)$. 
If $\sum_{l=1}^L \bar{\rho_l}\leq 1$, then $\sum_{l=1}^L \rho_l(f)< 1$ for all $f\in[0,\bar f]$, as $\sum_{l=1}^L \rho_l(f)<\sum_{l=1}^L \bar{\rho_l}$ for all $f\in(0,\underline f)$. Thus, if $\sum_{l=1}^L \bar{\rho_l}\leq 1$,  we have $\tau'(f)>0$ for all $f\in(0,\bar f)$, implying $f^*=0$. If $\sum_{l=1}^L \rho_l(\bar f)\geq 1$, then $\sum_{l=1}^L \rho_l(f)\geq 1$ for all $f\in(0,\bar f)$, as $\rho_l(f)$ is non-increasing in $f$. Thus, if $\sum_{l=1}^L \rho_l(\bar f)\geq 1$, we have $\tau'(f)<0$ for all $f\in(0,\bar f)$, implying $f^*=\bar f$. Otherwise, $f^*\in(0,\bar f)$. As $\tau(f)$ is convex in $f$, we know that $\tau'(f^*)=0$.

\section*{Appendix C: Proof of Lemma~\ref{Lem:derivative}}

Note that $\rho_l(f,W,S_{C_l})$ increases in $S_{C_l}$ and decreases in  $W$.
First, we show \eqref{eqn:partial-W}.
\begin{align}
&\tau_l(S_{C_l}, W+\delta W,f)-\tau_l(S_{C_l}, W,f)\nonumber\\
=&\int_{\mathcal K_l}\frac{1}{(W+\delta W)R_0(\xi)} \lambda(d\xi)\nonumber\\
&-S_{C_l}\int_{\mathcal A_l(\rho_l(S_{C_l}, W+\delta W,f),1)}\frac{1}{(W+\delta W)R_0(\xi)} \lambda(d\xi)\nonumber\\
&-(1-S_{C_l})\int_{\mathcal A_l(\rho_l(S_{C_l}, W+\delta W,f),0)}\Bigg(\frac{1}{(W+\delta W)R_0(\xi)} \nonumber\\
&\hspace{55mm}- \frac{1}{(W+\delta W)B_l}\Bigg)\lambda(d\xi)\nonumber\\
&-\int_{\mathcal K_l}\frac{1}{WR_0(\xi)} \lambda(d\xi)+S_{C_l}\int_{\mathcal A_l(\rho_l(S_{C_l}, W,f),1)}\frac{1}{WR_0(\xi)} \lambda(d\xi)\nonumber\\
&+(1-S_{C_l})\int_{\mathcal A_l(\rho_l(S_{C_l}, W,f),0)}\left(\frac{1}{WR_0(\xi)} - \frac{1}{WB_l}\right)\lambda(d\xi)\nonumber\\
=&-\frac{\delta W}{W}\int_{\mathcal K_l}\frac{1}{(W+\delta W)R_0(\xi)} \lambda(d\xi)\nonumber\\
&+\frac{\delta W}{W}S_{C_l}\int_{\mathcal A_l(\rho_l(S_{C_l}, W+\delta W,f),1)}\frac{1}{(W+\delta W)R_0(\xi)} \lambda(d\xi)\nonumber\\
&+\frac{\delta W}{W}(1-S_{C_l})\int_{\mathcal A_l(\rho_l(S_{C_l}, W+\delta W,f),0)}\Bigg(\frac{1}{(W+\delta W)R_0(\xi)} \nonumber\\
&\hspace{55mm}- \frac{1}{(W+\delta W)B_l}\Bigg)\lambda(d\xi)\nonumber\\
&-S_{C_l}\int_{\substack{\mathcal A_l(\rho_l(S_{C_l}, W+\delta W,f),1)\\-\mathcal A_l(\rho_l(S_{C_l}, W,f),1)}}\frac{1}{WR_0(\xi)} \lambda(d\xi)\nonumber\\
&-(1-S_{C_l})\int_{\substack{\mathcal A_l(\rho_l(S_{C_l}, W+\delta W,f),0)\\-\mathcal A_l(\rho_l(S_{C_l}, W,f),0)}}\left(\frac{1}{WR_0(\xi)} - \frac{1}{WB_l}\right)\lambda(d\xi)\nonumber
\end{align}
\begin{align}
\stackrel{(a)}{=}&-\frac{\delta W}{W}\tau_l(S_{C_l}, W+\delta W,f)\nonumber\\
&-S_{C_l}\tilde{\rho_l}\int_{\substack{\mathcal A_l(\rho_l(S_{C_l}, W+\delta W,f),1)\\-\mathcal A_l(\rho_l(S_{C_l}, W,f),1)}}\frac{1}{WR_l(\xi)} \lambda(d\xi)\nonumber\\
&-(1-S_{C_l})\hat{\rho_l}\int_{\substack{\mathcal A_l(\rho_l(S_{C_l}, W+\delta W,f),0)\\-\mathcal A_l(\rho_l(S_{C_l}, W,f),0)}}\frac{1}{WR_l(\xi)} \lambda(d\xi)\nonumber\\
=&-\frac{\delta W}{W}\tau_l(S_{C_l}, W+\delta W,f)-\tilde{\rho_l}(f-f)-\frac{\delta W}{W}\tilde{\rho_l}f\nonumber\\
&-\Delta (\delta W)\label{eqn:proof-dif-W}
\end{align}
where (a) is due to mean value theorem and $\tilde{\rho_l}, \hat{\rho_l}\in [\rho_l(S_{C_l}, W+\delta W,f), \rho_l(S_{C_l}, W,f)]$, and $\Delta (\delta f)$ is defined as
\begin{align}
&\Delta (\delta W)\nonumber\\
\triangleq&(1-S_{C_l})(\hat{\rho_l}-\tilde{\rho_l})\int_{\substack{\mathcal A_l(\rho_l(S_{C_l}, W+\delta W,f),0)\\-\mathcal A_l(\rho_l(S_{C_l}, W,f),0)}}\frac{1}{WR_l(\xi)} \lambda(d\xi)\nonumber
\end{align}
Similarly, we can show $\Delta (\delta W)/\delta W\to 0$ as $\delta W\to 0$.
Dividing $\delta W$ on both sides of \eqref{eqn:proof-dif-W} and taking $\delta W\to 0$, we can show \eqref{eqn:partial-W}.

Next, we show \eqref{eqn:partial-S}.
\begin{align}
&\tau_l(S_{C_l}+\delta S_{C_l},W,f)-\tau_l(f,W,S_{C_l})\nonumber\\
=&-(S_{C_l}+\delta S_{C_l})\int_{\mathcal A_l(\rho_l(S_{C_l}+\delta S_{C_l},W,f),1)}\frac{1}{WR_0(\xi)} \lambda(d\xi)\nonumber\\
&-(1-S_{C_l}-\delta S_{C_l})\int_{\mathcal A_l(\rho_l(S_{C_l}+\delta S_{C_l},W,f),0)}\left(\frac{1}{WR_0(\xi)}-\frac{1}{WB_l} \right)\lambda(d\xi)\nonumber\\
&+S_{C_l}\int_{\mathcal A_l(\rho_l(f,W,S_{C_l}),1)}\frac{1}{WR_0(\xi)} \lambda(d\xi)\nonumber\\
&+(1-S_{C_l})\int_{\mathcal A_l(\rho_l(f,W,S_{C_l}),0)}\left(\frac{1}{WR_0(\xi)}-\frac{1}{WB_l} \right)\lambda(d\xi)\nonumber\\
=&S_{C_l}\int_{\substack{\mathcal A_l(\rho_l(f,W,S_{C_l}),1)\\-\mathcal A_l(\rho_l(S_{C_l}+\delta S_{C_l},W,f),1)}}\frac{1}{WR_0(\xi)} \lambda(d\xi)\nonumber\\
&+(1-S_{C_l})\int_{\substack{\mathcal A_l(\rho_l(f,W,S_{C_l}),0)\\-\mathcal A_l(\rho_l(S_{C_l}+\delta S_{C_l},W,f),0)}}\left(\frac{1}{WR_0(\xi)}-\frac{1}{WB_l} \right)\lambda(d\xi)\nonumber\\
&-\delta S_{C_l}\Bigg(\int_{\mathcal A_l(\rho_l(S_{C_l}+\delta S_{C_l},W,f),1)}\frac{1}{WR_0(\xi)} \lambda(d\xi)\nonumber\\
&\hspace{5mm}-\int_{\mathcal A_l(\rho_l(S_{C_l}+\delta S_{C_l},W,f),0)}\left(\frac{1}{WR_0(\xi)}-\frac{1}{WB_l} \Bigg)\lambda(d\xi)\right)\nonumber
\end{align}
\begin{align}
\stackrel{(a)}{=}&\tilde{\rho_l}S_{C_l}\int_{\substack{\mathcal A_l(\rho_l(f,W,S_{C_l}),1)\\-\mathcal A_l(\rho_l(S_{C_l}+\delta S_{C_l},W,f),1)}}\frac{1}{WR_l(\xi)} \lambda(d\xi)\nonumber\\
&+\hat{\rho_l}(1-S_{C_l})\int_{\substack{\mathcal A_l(\rho_l(f,W,S_{C_l}),0)\\-\mathcal A_l(\rho_l(S_{C_l}+\delta S_{C_l},W,f),0)}}\frac{1}{WR_l(\xi)}\lambda(d\xi)\nonumber\\
&-\delta S_{C_l}\Bigg(\int_{\mathcal A_l(\rho_l(S_{C_l}+\delta S_{C_l},W,f),1)}\frac{1}{WR_0(\xi)} \lambda(d\xi)\nonumber\\
&-\int_{\mathcal A_l(\rho_l(S_{C_l}+\delta S_{C_l},W,f),0)}\left(\frac{1}{WR_0(\xi)}-\frac{1}{WB_l} \Bigg)\lambda(d\xi)\right)\nonumber\\
=&\tilde{\rho_l}S_{C_l}\int_{\mathcal A_l(\rho_l(f,W,S_{C_l}),1)}\frac{1}{WR_l(\xi)} \lambda(d\xi)\nonumber\\
&-\tilde{\rho_l}(S_{C_l}+\delta S_{C_l})\int_{\mathcal A_l(\rho_l(S_{C_l}+\delta S_{C_l},W,f),1)}\frac{1}{WR_l(\xi)} \lambda(d\xi)\nonumber\\
&+\hat{\rho_l}(1-S_{C_l})\int_{\mathcal A_l(\rho_l(f,W,S_{C_l}),0)}\frac{1}{WR_l(\xi)}\lambda(d\xi)\nonumber\\
&-\hat{\rho_l}(1-S_{C_l}-\delta S_{S_{C_l}})\int_{\mathcal A_l(\rho_l(S_{C_l}+\delta S_{C_l},W,f),0)}\frac{1}{WR_l(\xi)}\lambda(d\xi)\nonumber\\
&+\Delta_1(\delta S_{C_l})\nonumber\\
=&\tilde{\rho_l}(f-f)+\Delta_1(\delta S_{C_l})+\Delta_2(\delta S_{C_l})\nonumber\\
=&\Delta_1(\delta S_{C_l})+\Delta_2(\delta S_{C_l})\nonumber
\end{align}
where (a) is due to mean value theorem and $\tilde{\rho_l}, \hat{\rho_l}\in [\rho_l(f,W,S_{C_l}), \rho_l(S_{C_l}+\delta S_{C_l},W,f)]$, and $\Delta_1(\delta S_{C_l})$ and  $\Delta_2(\delta S_{C_l})$ are defined as
\begin{align}
&\Delta_1(\delta S_{C_l})\nonumber\\
\triangleq &\tilde{\rho_l}\delta S_{C_l}\int_{\mathcal A_l(\rho_l(S_{C_l}+\delta S_{C_l},W,f),1)}\frac{1}{WR_l(\xi)} \lambda(d\xi)\nonumber\\
&-\hat{\rho_l}\delta S_{S_{C_l}}\int_{\mathcal A_l(\rho_l(S_{C_l}+\delta S_{C_l},W,f),0)}\frac{1}{WR_l(\xi)}\lambda(d\xi)\nonumber\\
&-\delta S_{C_l}\Bigg(\int_{\mathcal A_l(\rho_l(S_{C_l}+\delta S_{C_l},W,f),1)}\frac{1}{WR_0(\xi)} \lambda(d\xi)\nonumber\\
&-\int_{\mathcal A_l(\rho_l(S_{C_l}+\delta S_{C_l},W,f),0)}\left(\frac{1}{WR_0(\xi)}-\frac{1}{WB_l} \right)\lambda(d\xi)\Bigg)\nonumber\\
&\Delta_2(\delta S_{C_l})\nonumber\\
\triangleq& (\hat{\rho_l}-\tilde{\rho_l})(1-S_{C_l})\int_{\substack{\mathcal A_l(\rho_l(f,W,S_{C_l}),0)\\-\mathcal A_l(\rho_l(S_{C_l}+\delta S_{C_l},W,f),0)}}\frac{1}{WR_l(\xi)}\lambda(d\xi)\nonumber\\
&+(\hat{\rho_l}-\tilde{\rho_l})\delta S_{S_{C_l}}\int_{\mathcal A_l(\rho_l(S_{C_l}+\delta S_{C_l},W,f),0)}\frac{1}{WR_l(\xi)}\lambda(d\xi)
\end{align}
Similarly, we can show  $\Delta_2(\delta S_{C_l})/\delta S_{C_l}\to 0$ as $\delta S_{C_l}\to 0$. In addition, we have
\begin{align}
&\lim_{\delta S_{C_l}\to 0}\frac{\Delta_1(\delta S_{C_l})}{\delta S_{C_l}}\nonumber\\
=&\int_{\mathcal A_l(\rho_l(f,W,S_{C_l}),1)}\left(\frac{\rho_l(f,W,S_{C_l})}{WR_l(\xi)}-\frac{1}{WR_0(\xi)}  \right)\lambda(d\xi)\nonumber\\
&-\int_{\mathcal A_l(\rho_l(f,W,S_{C_l}),0)}\left(\frac{\rho_l(f,W,S_{C_l})}{WR_l(\xi)}-\left(\frac{1}{WR_0(\xi)}-\frac{1}{WB_l} \right)\right)\lambda(d\xi)\nonumber\\
=&\int_{\substack{\mathcal A_l(\rho_l(f,W,S_{C_l}),1)\\-\mathcal A_l(\rho_l(f,W,S_{C_l}),0)}}\left(\frac{\rho_l(f,W,S_{C_l})}{WR_l(\xi)}-\frac{1}{WR_0(\xi)}  \right)\lambda(d\xi)\nonumber\\
&-\int_{\mathcal A_l(\rho_l(f,W,S_{C_l}),0)}\frac{1}{WB_l} \lambda(d\xi)\nonumber\\
<&0
\end{align}
Thus, we can show \eqref{eqn:partial-S}.

\section*{Appendix D: Proof of Theorem~\ref{Thm:W-C-time}}
By Lemma~\ref{Lem:derivative}, we have $\tau(f,W,\mathbf C)\leq \tau(f,W',\mathbf C')$. Thus, we have $\tau(f,W',\mathbf C')\geq \tau(f^*(W',\mathbf C'),W',\mathbf C')\geq \tau(f^*(W',\mathbf C'),W,\mathbf C)\geq \tau(f^*(W,\mathbf C),W,\mathbf C)$. Note that $\tau^*(W',\mathbf C')\triangleq \tau(f^*(W',\mathbf C'),W',\mathbf C')$
 and $\tau^*(W,\mathbf C)\triangleq \tau(f^*(W,\mathbf C),W,\mathbf C)$. Therefore, we complete the proof.

\end{document}